\newtheorem{theorem}{Theorem}
\newtheorem{lemma}{Lemma}
\newtheorem{corollary}{Corollary}
\newtheorem{definition}{Definition}
\newtheorem{remark}{Remark}
\newcommand{\trans}{{\mathrm{T}}}
\newcommand{\E}{\mathbb{E}}
\newcommand{\VaR}{\operatorname{VaR}}
\DeclareMathOperator*{\essinf}{ess\,inf}
\renewcommand{\P}{\mathbb{P}}
\newcommand{\N}{\mathbb{N}}
\newcommand{\R}{\mathbb{R}}
\newcommand{\Var}{\operatorname{Var}}
\renewcommand{\phi}{\varphi}
\newcommand{\calD}{\mathcal{D}}
\newcommand{\calF}{\mathcal{F}}
\newcommand{\calH}{\mathcal{H}}
\newcommand{\cald}{\mathrm{d}}
\newcommand{\calI}{\mathbb{I}}
\newcommand{\calB}{\mathcal{B}}
\newcommand{\inprob}{\stackrel{{\small\P}}{\to}}
\title{Least Squares Monte Carlo applied to Dynamic Monetary Utility Functions}
\author{Hampus Engsner\footnote{Corresponding author.  Department of Mathematics, Stockholm University, SE-10691 Stockholm, Sweden.  e-mail: hampus.engsner@math.su.se} }
\begin{document}
\maketitle

\begin{abstract}
In this paper we explore ways of numerically computing recursive dynamic monetary risk measures and utility functions. Computationally, this problem suffers from the curse of dimensionality and nested simulations are unfeasible if there are more than two time steps. The approach considered in this paper is to use a Least Squares Monte Carlo (LSM) algorithm to tackle this problem, a method which has been primarily considered for valuing American derivatives, or more general stopping time problems, as these also give rise to backward recursions with corresponding challenges in terms of numerical computation. We give some overarching consistency results for the LSM algorithm in a general setting as well as explore numerically its performance for recursive Cost-of-Capital valuation, a special case of a dynamic monetary utility function. 
\end{abstract}

{\bf keywords:} Monte Carlo algorithms, least-squares regression, multi-period valuation, dynamic utility funcitons

\section{Introduction}

Dynamic monetary risk measures and utility functions, as described for instance in \cite{Artzner-Delbaen-Eber-Heath-Ku-07} and \cite{Cheridito-Delbaen-Kupper-06}, are time consistent if and only if they satisfy a recursive relationship (see for instance \cite{Cheridito-Kupper-11}, \cite{Pelsser-G-16}). In the case of time-consistent valuations of cash flows, often in an insurance setting, (e.g. \cite{Engsner-Lindholm-Lindskog-17}, \cite{Engsner-Lindskog-20}, \cite{Moehr-11}, \cite{Pelsser-G-16}, \cite{Pelsser-Stadje-14}, \cite{Pelsser-G-20}, \cite{Dhaene-17}),  analogous recursions also appear. Recursive relationships also occur as properties of solutions to optimal stopping problems, of which valuation of American derivatives is a special case. It is well known that numerical solutions to these kinds of recursions suffer from "the curse of dimensionality": As the underlying stochastic process generating the flow of information grows high dimensional, direct computations of solutions of these recursions prove unfeasible. 

In order to make objective of this paper more clear, consider a probability space $(\Omega, \calF,\P)$, a $d$-dimensional Markov chain $(S_t)_{t=0}^T$ in $L^2(\calF)$ and its natural filtration $(\calF_t)_{t=0}^T$. We are interested in computing $V_0$ given as the solution to the following recursion
\begin{align}
V_t = \phi_{t}(f(S_{t+1})+V_{t+1}), \quad V_T = 0, \label{eq:recursion_mot}
\end{align}
where, for each $t$, $\phi_t : L^2(\calF_{t+1}) \to L^2(\calF_t)$ is a given law-invariant mapping (see Section \ref{sec:Background} and definition \ref{def:law_invariance}). Recursions such as \eqref{eq:recursion_mot} arise when describing time-consistent dynamic monetary risk measures/utility functions (see e.g. \cite{Cheridito-Delbaen-Kupper-06}).  
Alternatively, we may be interested in computing $V_0$ given as the solution to the following recursion
\begin{align}
V_t = f(S_{t}) \lor \E[V_{t+1} \mid \calF_t], \quad  V_T = f(S_T), \label{eq:optistopping_mot}
\end{align}
where $f:\R^d \to \R$ is a given function. Recursions such as \eqref{eq:optistopping_mot} arise when solving discrete-time optimal stopping problems or valuing American-style derivatives (see e.g. \cite{Egloff-05} and \cite{Longstaff-Schwartz-01}).  
In this article we will focus the recursive expression \eqref{eq:recursion_mot}. In either case, due to the Markovian assumptions, we expect $V_t$ to be determined by some deterministic function of the state $S_t$ at time $t$. The curse of dimensionality can now be succinctly put as the statement that as the dimension $d$ grows, direct computation of $V_t$ often becomes unfeasible. Additionally, brute-force valuation via a nested Monte Carlo simulation, discussed in \cite{Brodie-15} and \cite{Barrera-18}, is only a feasible option when $T=2$, as the number of required simulations would grow exponentially with $T$. One approach to tackle this problem is the Least Squares Monte Carlo (LSM) algorithm, notably used in \cite{Longstaff-Schwartz-01} to value American-style derivatives, and consists of approximating $V_{t+1}$ in either \eqref{eq:recursion_mot} or \eqref{eq:optistopping_mot} as a linear combination basis functions of the state $S_{t+1}$ via least-squares regression. While most often considered for optimal stopping problems (\cite{Longstaff-Schwartz-01}, \cite{Tsitsiklis-vanRoy-01}, \cite{Clement-etal-02}, \cite{Stentoft-04}, \cite{Egloff-05}, \cite{Zanger-09}, \cite{Zanger-13}, \cite{Zanger-18}), it has also been used recently in \cite{Pelsser-G-20} for the purpose of actuarial valuation, with respect to a recursive relationship in line with \eqref{eq:recursion_mot}.

The paper is organized as follows. In Section \ref{sec:Background} we introduce the mathematical definitions and notation that will allow us to describe the LSM algorithm in our setting mathematically, as well as to formulate theoretical results. Section \ref{sec:theoretical_results} contains consistency results with respect to computing \eqref{eq:recursion_mot} both in a general setting, only requiring an assumption of continuity in $L^2$ norm, and for the special case of a Cost-of-Capital valuation, studied in \cite{Engsner-Lindholm-Lindskog-17} and \cite{Engsner-Lindskog-20}, under the assumption that capital requirements are given by the risk measure Value-at-Risk, in line with solvency II. The lack of convenient continuity properties of Value-at-Risk pose certain challenges that are handled.
Section \ref{sec:numerical_results} investigates the numerical performance of the LSM algorithm on valuation problems for a set of models for liability cash flows. Here some effort is also put into evaluating and validating the LSM algorithm's performance, as this is not trivial for the considered cases.

\section{Mathematical setup}\label{sec:Background}

Consider a probability space $(\Omega, \calF,\P)$. On this space we consider two filtrations $(\calF_t)_{t=0}^T$, with $\calF_0=\{\emptyset,\Omega\}$, and $(\calH_t)_{t=0}^T$. The latter filtration is an initial expansion of the former: take $\calD\subset \calF$ and set $\calH_t:=\calF_t\vee \calD$. 
$\calD$ will later correspond to the $\sigma$-field generated by initially simulated data needed for numerical approximations. 
Define $L^2(\calH_t)$ as the spaces of $\calH_t$ measurable random variables $Z$ with $\E[Z^2]<\infty$. The subspace $L^2(\calF_t)\subset L^2(\calH_t)$ is defined analogously. 
All equalities and inequalities between random variables should be interpreted in the $\P$-almost sure sense. 

We assume that the probability space supports a Markov chain $S =(S_t)_{t=0}^T$ on $(\R^d)^T$, where $S_0$ is constant, and an iid sequence $D=(S^{(i)})_{i\in\N}$, independent of $S$, where, for each $i$, $S^{(i)}=(S^{(i)}_t)_{t=0}^T$ has independent components with $\mathcal{L}(S^{(i)}_t)=\mathcal{L}(S_t)$ (equal in distribution). $D$ will represent possible initially simulated data and we set $\calD=\sigma(D)$. The actual simulated data will be a finite sample and we write $D_n := (S^{(i)})_{i=1}^n$. For $Z\in L^2(\calF)$ we write $\Vert Z\Vert_{2} := \E[|Z|^2\mid \calH_0]^\frac{1}{2}$.
Notice that $\Vert Z\Vert_{2}$ is a nonrandom number if $Z$ is independent of $D$. 

The mappings $\rho_t$ and $\phi_t$ appearing in Definitions \ref{def:dynrisk} and \ref{def:dynutil} below can be defined analogously as mappings from $L^p(\calH_{t+1})$ to $L^p(\calH_t)$ for $p\neq 2$. However, $p=2$ will be the relevant choice for the applications treated subsequently. 

\begin{definition}\label{def:dynrisk}
A dynamic monetary risk measure is a sequence $(\rho_t)_{t=0}^{T-1}$ of mappings $\rho_t:L^2(\calH_{t+1})\to L^2(\calH_t)$ satisfying
\begin{align}
& \textrm{if } \lambda\in L^2(\calH_t) \textrm{ and } Y\in L^2(\calH_{t+1}), \textrm{ then } 
\rho_t(Y+\lambda)=\rho_t(Y)-\lambda,  \label{eq:ti_r}\\
& \textrm{if } Y,\widetilde{Y}\in L^2(\calH_{t+1}) \textrm{ and } Y\leq \widetilde{Y}, \textrm{ then } 
\rho_t(Y)\geq \rho_t(\widetilde{Y}),  \label{eq:mo_r}\\
& \rho_t(0)=0. \label{eq:ph_r}
\end{align}
\end{definition}
The elements $\rho_t$ of the dynamic monetary risk measure $(\rho_t)_{t=0}^{T-1}$ are called conditional monetary risk measures.  
  
\begin{definition}\label{def:dynutil}
A dynamic monetary utility function is a sequence $(\phi_t)_{t=0}^{T-1}$ of mappings $\phi_t:L^2(\calH_{t+1})\to L^2(\calH_t)$ satisfying
\begin{align}
& \textrm{if } \lambda\in L^2(\calH_t) \textrm{ and } Y\in L^2(\calH_{t+1}), \textrm{ then } 
\phi_t(Y+\lambda)=\phi_t(Y)+\lambda, \label{eq:ti_u}\\
& \textrm{if } Y,\widetilde{Y}\in L^2(\calH_{t+1}) \textrm{ and } Y\leq \widetilde{Y}, \textrm{ then } 
\phi_t(Y)\leq \phi_t(\widetilde{Y}), \label{eq:mo_u}\\
& \phi_t(0)=0. \label{eq:ph_u}
\end{align}
\end{definition}
Note that if $(\rho_t)_{t=0}^{T-1}$ is a dynamic monetary risk measure, $(\rho_t(- \cdot))_{t=0}^{T-1}$ is a dynamic monetary utility function.  In what follows we will focus on dynamic monetary utility function of the form 
\begin{align}\label{eq:valuation_mapping_def}
\phi_t(Y)=\rho_t(-Y)-\frac{1}{1+\eta_t}\E\big[\big(\rho_t(-Y)-Y\big)^+\mid\calH_t\big],
\end{align}
where $(\rho_t)_{t=0}^T$ is a dynamic monetary risk measures in the sense of Definition \ref{def:dynrisk} and $(\eta_t)_{t=0}^{T-1}$ is a sequence of nonrandom numbers in $(0,1)$. We may consider a more general version of this dynamic monetary utility function by allowing $(\eta_t)_{t=0}^{T-1}$ to be an $(\calF_t)_{t=0}^{T-1}$ adapted sequence, however we choose the simpler version here. That $(\phi_t)_{t=0}^T$ is indeed a dynamic monetary utility function is shown in \cite{Engsner-Lindholm-Lindskog-17}. 

We will later consider conditional monetary risk measures that are conditionally law invariant in the sense of Definition \ref{def:law_invariance} below. Conditional law invariance will then be inherited by $\phi_t$ in \eqref{eq:valuation_mapping_def}. 

\begin{definition}\label{def:law_invariance}
A mapping $\phi_t: L^2(\calH_{t+1}) \to L^2(\calH_{t})$ is called law invariant if $\phi_t(X)=\phi_t(Y)$ whenever $\P(X \in \cdot \mid \calH_t)=\P(Y \in \cdot \mid \calH_t)$.
\end{definition}

We now define the value process corresponding to a dynamic monetary utility function $(\phi_t)_{t=0}^{T-1}$ in the sense of Definition \ref{def:dynutil} with respect to the filtration $(\calF_t)_{t=0}^{T-1}$ instead of $(\calH_t)_{t=0}^{T-1}$. The use of the smaller filtration is due to that a value process of the sort appearing in Definition \ref{def:phi_value} is the theoretical object that we aim to approximate well by the methods considered in this paper.  

\begin{definition}\label{def:phi_value}
Let $L:=(L_t)_{t=1}^T$ with $L_t\in L^2(\calF_t)$ for all $t$.
Let $(\phi_t)_{t=0}^{T-1}$ be a dynamic monetary utility function with respect to $(\calF_t)_{t=0}^{T-1}$. 
Let 
\begin{align}
V_t(L,\phi) := \phi_t(L_{t+1}+V_{t+1}(L,\phi)), \quad V_T(L,\phi) :=0. \label{eq:phivalue_def}
\end{align}
We refer to $V_t(L,\phi)$ as the time $t$ $\phi$-value of $L$.
\end{definition}
Whenever it will cause no confusion, we will suppress the argument $\phi$ in $V_{t}(L,\phi)$ in order to make the expressions less notationally heavy.

\begin{remark}
Letting $\rho$ be a dynamic monetary risk measure and letting $\phi$ be a dynamic monetary utility function, $-\sum_{s=1}^t L_t -V_t(-L,-\rho)$ will be a conditional monetary risk measure on the cash flow $L$ in the sense of \cite{Cheridito-Delbaen-Kupper-06} and likewise $\sum_{s=1}^t L_t +V_t(L,\phi)$ will be a conditional monetary utility function in the sense of \cite{Cheridito-Delbaen-Kupper-06}, with $L$ being interpreted as a process of incremental cash flows. If $L$ is a liability cash flow, we may write the risk measure of $L$ as $\sum_{s=1}^t L_t + V_t(L,\rho)$. Importantly, any time-consistent dynamic monetary utility function/risk measure may be written in this way (see e.g. \cite{Cheridito-Kupper-11}). 
Often convexity or subadditivity is added to the list of desired properties in definitions \ref{def:dynrisk} and \ref{def:dynutil} (see e.g. \cite{Artzner-Delbaen-Eber-Heath-Ku-07}, \cite{Cheridito-Delbaen-Kupper-06}, \cite{Cheridito-Kupper-11}). 
\end{remark}

\subsection{The approximation framework}\label{sec:approix_framework}

For $t=1, \dots, T$, consider a sequence of functions $\{1,\Phi_{t,1},\Phi_{t,2}, \dots \}$, where for each $i\in \N$, $\Phi_{t,i} : \R^d \to \R$ has the property $\Phi_{t,i}(S_t) \in L^p(\calF_t)$ and the set $\{1,\Phi_{t,1}(S_t),\Phi_{t,2}(S_t), \dots \}$ make up a.s. linearly independent random variables. We define the approximation space $\mathcal{B}_{t,N}$ and its corresponding $L^2$ projection operator $P_{\mathcal{B}_{t,N}}:L^2(\calH_t)\to \mathcal{B}_{t,N}$ as follows: for $N\in \N$ and $t\in\{0,\dots,T\}$, 
\begin{align}
&\mathcal{B}_{t,N} := \mathrm{span} \{1, \Phi_{t,1}(S_t), \dots, \Phi_{t,N}(S_t)\}, \label{eq:B_set_def}\\
&P_{\mathcal{B}_{t,N}}Z_{t} := \arg \inf_{B \in \mathcal{B}_{t,N}}\Vert Z_{t}-B\Vert_{2}. \label{eq:projection_def}
\end{align}
Defining $\mathbf{\Phi}_{t,N} := (1, \Phi_{t,1}(S_t), \dots, \Phi_{t,N}(S_t))^\trans$, note that the unique minimizer in \eqref{eq:projection_def} is given by $P_{\mathcal{B}_{t,N}}Z_{t} := \beta_{t,N,Z_t}^\trans\mathbf{\Phi}_{t,N} $, with 
\begin{align}\label{eq:opt_beta_def}
 \beta_{t,N,Z_t} =\E\big[\mathbf{\Phi}_{t,N} \mathbf{\Phi}_{t,N}^{\trans} \mid \calH_0\big]^{-1}\E\big[\mathbf{\Phi}_{t,N} Z_t \mid \calH_0\big],
\end{align}
where the expected value of a vector or matrix is interpreted elementwise. Note that if $Z_t$ in \eqref{eq:opt_beta_def} is independent of the initial data $D$, then $\beta_{t,N,Z_t}$ is a nonrandom vector. Indeed, we will only apply the operator $P_{\mathcal{B}_{t,N}}$ to random variables $Z_t$ independent of $D$.

For each $t$, consider a nonrandom function $z_t$ such that $Z_t = z_t(D_M,S_t)\in L^2(\calH_t)$. For $M\in\N$, let 
\begin{align*}
\mathbf{\Phi}^{(M)}_{t,N}:=\left(\begin{array}{cccc}
1 & \Phi_{t,1}(S^{(1)}_t) & \dots & \Phi_{t,N}(S^{(1)}_t)\\
\vdots & \dots & \dots & \vdots \\
1 & \Phi_{t,1}(S^{(M)}_t) & \dots & \Phi_{t,N}(S^{(M)}_t)
\end{array}\right), 
\end{align*}
\begin{align*}
Z_t^{(M)}:=\left(\begin{array}{c}
z_t(D_M,S^{(1)}_t) \\
\vdots \\
z_t(D_M,S^{(M)}_t)
\end{array}\right)
\end{align*}
and define
\begin{align}
\widehat{\beta}^{(M)}_{t,N,Z_t} &:= \Big(\big(\mathbf{\Phi}^{(M)}_{t,N}\big)^\trans \mathbf{\Phi}^{(M)}_{t,N} \Big)^{-1}\big(\mathbf{\Phi}^{(M)}_{t,N}\big)^\trans Z_t^{(M)}, \label{eq:beta_estimate_def} \\
P_{\mathcal{B}_{t,N}}^{(M)}Z_{t} &:= (\widehat{\beta}^{(M)}_{t,N,Z_t})^\trans \mathbf{\Phi}_{t,N}. \label{eq:M_sample_projection_def}
\end{align}
Notice that $\widehat{\beta}^{(M)}_{t,N,Z_t}$ is independent of $S$ and is the standard OLS estimator of $\beta_{t,N,Z_t}$ in \eqref{eq:opt_beta_def}. Notice also that $\mathbf{\Phi}_{t,N} $ is independent of $D$. 
With the above definitions we can define the Least Squares Monte Carlo (LSM) algorithm for approximating the value $V_0(L, \phi)$ given by \eqref{def:phi_value}. 

Let $(\phi_t)_{t=0}^{T-1}$ be a sequence of law-invariant mappings $\phi_t: L^2(\calH_{t+1}) \to L^2(\calH_t)$. Consider a stochastic process $L:=(L_t)_{t=1}^T$, where $L_t=g_t(S_t)\in L^2(\calF_t)$ for all $t$ for some nonrandom functions $g_t : \R^d \to \R$. The goal is to estimate the values $(V_t(L))_{t=0}^T$ given by Definition \ref{def:phi_value}. Note that the sought values $V_t(L)$ are independent of $D$ and thus, by the law-invariance property and the Markov property, $V_t(L)$ is a function of $S_t$ for each $t$. Now we may describe the LSM algorithm with respect to $N$ basis functions and simulation sample size $M$. The LSM algorithm corresponds to the following recursion:
\begin{align}
\widehat{V}^{(M)}_{N,t}(L) :=P^{(M)}_{\mathcal{B}_{t,N}}\phi_{t}\big(L_{t+1}+\widehat{V}^{(M)}_{N,t+1}(L)\big), 
\quad \widehat{V}^{(M)}_{N,T}(L) :=0. \label{eq:LSM_definition}
\end{align}
Notice that $\widehat{V}^{(M)}_{N,t}$ is a function of the random variables $S_t$ and $S_{u}^{(i)}$ for $1 \leq i \leq M$ and $t+1\leq u\leq T$. In particular, $\widehat{V}^{(M)}_{N,t}\in L^2(\calH_t)$.
In the section below, we will investigate when, and in what manner, $\widehat{V}^{(M)}_{N,t}\in L^2(\calH_t)$ may converge to $V_t\in L^2(\calF_t)\subset L^2(\calH_t)$. For this purpose, we make the additional useful definition:
\begin{align}
\widehat{V}_{N,t}(L) :=P_{\mathcal{B}_{t,N}}\phi_{t}\big(L_{t+1}+\widehat{V}_{N,t+1}(L)\big), 
\quad \widehat{V}_{N,T}(L) :=0. \label{eq:optimal_LSM_definition}
\end{align}
$\widehat{V}_{N,t}(L)$ is to be interpreted as an idealized LSM estimate, where we make a least-squares optimal estimate in each iteration. Note that this quantity is independent of $D$

\section{Consistency results}\label{sec:theoretical_results}

In the following section we prove what essentially are two consistency results for the LSM estimator $\widehat{V}^{(M)}_{N,t}(L)$ along with providing conditions for these to hold. These consistency results are  analogous to Theorems 3.1 and 3.2 in \cite{Clement-etal-02}. The first and simplest result, Lemma \ref{lem:Lp_convergence}, is that if we have a flexible enough class of basis functions, $\widehat{V}_{N,t}(L)$ will asymptotically approach the true value $V_t(L)$. The second consistency result, Theorem  \ref{thm:simulation_convergence}, is that when $N$ is kept fixed, then $\widehat{V}^{(M)}_{N,t}(L)$ will approach the least-squares optimal $\widehat{V}_{N,t}(L)$ for each $t$ as $M$ grows to infinity. Hence, we show that the LSM estimator for a fixed number of basis functions is consistent in the sense that the simulation-based projection operator $P^{(M)}_{\mathcal{B}_{t,N}}$ will approach $P_{\mathcal{B}_{t,N}}$ even in the presence of errors in a multiperiod setting. Lemma \ref{lem:noname2} and Theorem \ref{thm:noname3} furthermore extends these results to the case of a Cost-of-Capital valuation, studied in \cite{Engsner-Lindholm-Lindskog-17} and \cite{Engsner-Lindskog-20}, which here is dependent on the non-continuous risk measure Value-at-Risk. Note from Section \ref{sec:Background} that these results presume simulated data not to be path dependent, in contrast to the results in \cite{Zanger-18}.

We should note that these results do not give a rate of convergence, which is done in the optimal stopping setting in for instance \cite{Egloff-05} and \cite{Zanger-09}. Especially, these papers provide a joint convergence rate in which $M$ and $N$ simultaneously go to infinity, something which is not done here. There are three main reasons for this. First of all, in this paper the purpose is to investigate LSM methods given by standard OLS regression, i.e. we do not want to involve a truncation operator as we believe this would not likely be implemented in practice.  
The use of truncation operators is necessary for the results in \cite{Egloff-05} and \cite{Zanger-09}, although one can handle the case of unbounded cash flows by letting the bound based on the truncation operator suitably go to infinity 
along with $N$ and $M$. Secondly, we believe that the bounds involved in the rates of convergence would be quite large in our case if we applied repeatedly the procedure in \cite{Egloff-05} or \cite{Zanger-09} (see remark  \ref{rem:rate_of_conv}). Thirdly, we want to consider mappings which are $L^2$-continuous (Definition \ref{def:L2_cont}) but not necessarily Lipschitz. In this case it is not clear how convergence can be established other than at some unspecified rate.

\subsection{General convergence results}

We first define a useful mode of continuity that we will require to show our first results on the convergence of the LSM algorithm.
\begin{definition}\label{def:L2_cont}
The mapping $\phi_t: L^2(\calH_{t+1}) \to L^2(\calH_{t})$ is said to be $L^2$-continuous if 
$\Vert X-X_n\Vert_ {2} \inprob 0 \text{ implies } \Vert \phi_t(X) - \phi_t(X_n)\Vert_{2} \inprob 0$.
\end{definition}
Notice that if $(X_n)_{n=1}^\infty$ and $X$ are independent of $D$, the convergence in probability may be replaced by convergence of real numbers. 

We are now ready to formulate our first result on the convergence of the LSM algorithm. The first result essentially says that if we make the best possible estimation in each recursion step, using $N$ basis functions, then, for each $t$, the estimator of $V_t$ will converge in $L^2$ to $V_t$ as $N \to \infty$. This result is not affected by the initial data $D$, as it does not require any simulation-based approximation.

\begin{lemma} \label{lem:Lp_convergence}
For $t=0,\dots,T-1$, let the mappings $\phi_t: L^2(\calH_{t+1}) \to L^2(\calH_{t})$ be $L^2$-continuous and law invariant. 
For $t=1,\dots,T$, let $\bigcup_{n\in \N} \calB_{t,n}$ be dense in the set $\{h(S_t) \mid h:\R^d \to \R,  h(S_t) \in L^2(\calF_t)\}$.
Then, for $t=0,\dots,T-1$, 
\begin{align*}
\Vert \widehat{V}_{N,t}(L) -V_t(L)\Vert_2 \in \R \text{ and } \lim_{N\to\infty}\Vert \widehat{V}_{N,t}(L) -V_t(L)\Vert_2=0.
\end{align*}
\end{lemma}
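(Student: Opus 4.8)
The statement has two parts: first that $\|\widehat{V}_{N,t}(L) - V_t(L)\|_2$ is a real number (not merely a random variable), and second that it converges to $0$ as $N \to \infty$. The first part is essentially bookkeeping: both $\widehat{V}_{N,t}(L)$ and $V_t(L)$ are independent of the initial data $D$ (this is stated explicitly for $V_t$ and noted after the definition of the idealized estimator $\widehat{V}_{N,t}$), so their difference is independent of $D$, and hence $\|\cdot\|_2 = \E[|\cdot|^2 \mid \calH_0]^{1/2}$ is nonrandom by the remark following the norm's definition. The real work is the convergence, which I would prove by \textbf{backward induction on $t$}, running from the terminal time $T$ down to $0$.

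The base case is $t = T-1$. Here $\widehat{V}_{N,T} = 0 = V_T$, so $\widehat{V}_{N,T-1}(L) = P_{\calB_{T-1,N}}\phi_{T-1}(L_T) $ while $V_{T-1}(L) = \phi_{T-1}(L_T)$. Now $\phi_{T-1}(L_T) = \phi_{T-1}(g_T(S_T))$ is, by the law-invariance and Markov property argument already invoked in the text, a function $h(S_{T-1}) \in L^2(\calF_{T-1})$. So $V_{T-1}$ lies in the target set $\{h(S_{T-1}) : h(S_{T-1}) \in L^2(\calF_{T-1})\}$, whose union of approximation spaces $\bigcup_n \calB_{T-1,n}$ is assumed dense. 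Since $P_{\calB_{T-1,N}}$ is the $L^2$-orthogonal projection onto $\calB_{T-1,N}$, we have $\|P_{\calB_{T-1,N}} V_{T-1} - V_{T-1}\|_2 = \inf_{B \in \calB_{T-1,N}} \|V_{T-1} - B\|_2 \to 0$ by density, as the infima are nonincreasing in $N$. This gives the base case.

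For the inductive step, assume $\|\widehat{V}_{N,t+1}(L) - V_{t+1}(L)\|_2 \to 0$ and bound the time-$t$ error via the triangle inequality:
\begin{align*}
\|\widehat{V}_{N,t} - V_t\|_2
&\leq \big\| P_{\calB_{t,N}}\phi_t(L_{t+1} + \widehat{V}_{N,t+1}) - P_{\calB_{t,N}}\phi_t(L_{t+1} + V_{t+1}) \big\|_2 \\
&\quad + \big\| P_{\calB_{t,N}}\phi_t(L_{t+1} + V_{t+1}) - \phi_t(L_{t+1} + V_{t+1}) \big\|_2.
\end{align*}
The second term is the projection error of the fixed target $V_t = \phi_t(L_{t+1} + V_{t+1})$, which (again a function of $S_t$ by law-invariance plus Markov) tends to $0$ by density exactly as in the base case. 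For the first term I would use that $P_{\calB_{t,N}}$ is a contraction in $\|\cdot\|_2$ (being an orthogonal projection), so it is bounded by $\|\phi_t(L_{t+1} + \widehat{V}_{N,t+1}) - \phi_t(L_{t+1} + V_{t+1})\|_2$. By the induction hypothesis the arguments converge in $\|\cdot\|_2$, hence $L^2$-continuity of $\phi_t$ forces this difference to $0$.

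The main obstacle I anticipate is the first term. The difficulty is that the contraction bound strips off $P_{\calB_{t,N}}$ uniformly in $N$, but then the $L^2$-continuity hypothesis (Definition \ref{def:L2_cont}) is phrased in terms of convergence \emph{in probability} of the norms, whereas here everything is $D$-independent and I want convergence of real numbers. The remark after Definition \ref{def:L2_cont} addresses precisely this: when the sequence and its limit are independent of $D$, convergence in probability may be replaced by convergence of reals. I must therefore verify carefully that $\phi_t(L_{t+1} + \widehat{V}_{N,t+1})$ is $D$-independent for the idealized estimator --- which it is, since $\widehat{V}_{N,t+1}$ is built only from exact projections $P_{\calB}$ and from $L$, with no simulated sample $D_M$ entering. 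Once that independence is confirmed, the two convergence notions coincide and the continuity hypothesis applies directly, closing the induction.
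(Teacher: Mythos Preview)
Your proof is correct and follows essentially the same strategy as the paper: backward induction on $t$, combining the $L^2$-continuity of $\phi_t$ (applied via the induction hypothesis) with the density of $\bigcup_n \calB_{t,n}$ to control the projection error. The only minor difference is the intermediate term chosen in the triangle inequality. You insert $P_{\calB_{t,N}}\phi_t(L_{t+1}+V_{t+1})$ and then invoke the contraction property of the orthogonal projection to bound the first summand by $\|\phi_t(L_{t+1}+\widehat{V}_{N,t+1})-\phi_t(L_{t+1}+V_{t+1})\|_2$. The paper instead inserts $\phi_t(L_{t+1}+\widehat{V}_{N,t+1})$ and bounds the resulting projection-error term $\inf_{B\in\calB_{t,N}}\|B-\phi_t(L_{t+1}+\widehat{V}_{N,t+1})\|_2$ by $\inf_{B\in\calB_{t,N}}\|B-V_t\|_2$ plus the same continuity term. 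Both decompositions lead to the same two ingredients (density plus $L^2$-continuity); your use of the contraction property is arguably a touch cleaner, while the paper's version avoids appealing to the Hilbert-space projection property explicitly.
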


The second result uses the independence assumptions of $D$ to prove a somewhat technical result for when $P_{\mathcal{B}_{t,N}}^{(M)}$ given by \eqref{eq:M_sample_projection_def} asymptotically approaches the projection $P_{\mathcal{B}_{t,N}}$ given by \eqref{eq:projection_def}.

\begin{lemma}\label{lem:simulation_stability_property}
Let $Z_t = z_t(S_t) \in L^2(\calF_t)$. For each $M \in \N$, let $Z_{M,t} = z_{M,t}(D_M,S_t) \in L^2(\calH_t)$, where $z_{M,t}(D_M, \cdot)$ only depends on $D_M$ through $\{S_{u}^{(i)}: 1 \leq i \leq M, u\geq t+1\}$, i.e. 
\begin{align}
\mathcal{L}\big(z_{M,t}(D_M, S_t^{(i)})\big)=\mathcal{L}\big(z_{M,t}(D_M, S_t)\big).\label{eq:independence_condition}
\end{align}
Then, $\Vert Z_t-Z_{M,t}\Vert_{2} \inprob 0$ as $M\to\infty$ implies 
\begin{align*}
\Vert P_{\mathcal{B}_{t,N}}Z_t-P^{(M)}_{\mathcal{B}_{t,N}}Z_{M,t}\Vert_{2} \inprob 0 \quad \text{and} \quad 
\widehat{\beta}^{(M)}_{t,N,Z_{M,t}} \inprob \beta_{t,N,Z_t} \quad \text{ as }  M \to \infty.
\end{align*}
\end{lemma}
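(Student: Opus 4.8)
The plan is to establish the two claimed convergences by analyzing the OLS estimator $\widehat{\beta}^{(M)}_{t,N,Z_{M,t}}$ directly through its explicit formula \eqref{eq:beta_estimate_def} and showing it converges in probability to the idealized coefficient vector $\beta_{t,N,Z_t}$ from \eqref{eq:opt_beta_def}. Since $P^{(M)}_{\mathcal{B}_{t,N}}Z_{M,t} = (\widehat{\beta}^{(M)}_{t,N,Z_{M,t}})^\trans \mathbf{\Phi}_{t,N}$ and $P_{\mathcal{B}_{t,N}}Z_t = \beta_{t,N,Z_t}^\trans \mathbf{\Phi}_{t,N}$ share the \emph{same} regressor vector $\mathbf{\Phi}_{t,N}$ (evaluated at the true state $S_t$, independent of $D$), the first convergence will follow from the second: once $\widehat{\beta}^{(M)}_{t,N,Z_{M,t}} \inprob \beta_{t,N,Z_t}$, the difference $\Vert P_{\mathcal{B}_{t,N}}Z_t - P^{(M)}_{\mathcal{B}_{t,N}}Z_{M,t}\Vert_2$ equals the $\Vert\cdot\Vert_2$-norm of $(\beta_{t,N,Z_t}-\widehat{\beta}^{(M)}_{t,N,Z_{M,t}})^\trans \mathbf{\Phi}_{t,N}$, which goes to zero in probability by continuity of the quadratic form and the fixed finite moment structure of $\mathbf{\Phi}_{t,N}$. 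So the genuine work is in the coefficient convergence.

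\textbf{Decomposing the OLS estimator.} Writing $A^{(M)} := \tfrac{1}{M}\big(\mathbf{\Phi}^{(M)}_{t,N}\big)^\trans \mathbf{\Phi}^{(M)}_{t,N}$ and $b^{(M)} := \tfrac{1}{M}\big(\mathbf{\Phi}^{(M)}_{t,N}\big)^\trans Z_{M,t}^{(M)}$, so that $\widehat{\beta}^{(M)}_{t,N,Z_{M,t}} = (A^{(M)})^{-1}b^{(M)}$, I would attack the two factors separately. The matrix $A^{(M)}$ is an average of the iid rows $\mathbf{\Phi}_{t,N}(S_t^{(i)})\mathbf{\Phi}_{t,N}(S_t^{(i)})^\trans$, which do not involve $Z_{M,t}$ at all, so by the law of large numbers $A^{(M)} \inprob \E[\mathbf{\Phi}_{t,N}\mathbf{\Phi}_{t,N}^\trans \mid \calH_0]$; invertibility of the limit is guaranteed by the assumed almost-sure linear independence of $\{1,\Phi_{t,1}(S_t),\dots\}$, and $X \mapsto X^{-1}$ is continuous at invertible matrices, giving $(A^{(M)})^{-1} \inprob \E[\mathbf{\Phi}_{t,N}\mathbf{\Phi}_{t,N}^\trans\mid\calH_0]^{-1}$. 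The vector $b^{(M)} = \tfrac{1}{M}\sum_{i=1}^M \mathbf{\Phi}_{t,N}(S_t^{(i)}) z_{M,t}(D_M,S_t^{(i)})$ is the delicate piece, because the summands are \emph{not} iid — the integrand $z_{M,t}$ depends on the whole sample $D_M$ through the future coordinates.

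\textbf{The main obstacle} is therefore controlling $b^{(M)}$ in the presence of this within-sample dependence, and this is exactly where the hypothesis \eqref{eq:independence_condition} must be exploited. The key structural fact is that $z_{M,t}(D_M,\cdot)$ depends on $D_M$ only through future coordinates $\{S_u^{(i)} : u \geq t+1\}$, which by the independence of components of each $S^{(i)}$ are independent of the time-$t$ coordinate $S_t^{(i)}$ that enters the regressor. The distributional identity \eqref{eq:independence_condition} records that plugging a sample point $S_t^{(i)}$ into $z_{M,t}(D_M,\cdot)$ yields the same law as plugging in an independent copy $S_t$; this lets me replace the empirical average of dependent terms by a conditional expectation given the future data. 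Concretely, I would condition on the $\sigma$-field generated by the future coordinates, under which the $S_t^{(i)}$ are iid and independent of $z_{M,t}(D_M,\cdot)$, compute a conditional mean and a vanishing conditional variance (the $L^2$ bounds come from $Z_{M,t},Z_t \in L^2$ and the finite second moments of $\mathbf{\Phi}_{t,N}$), and then pass the hypothesis $\Vert Z_t - Z_{M,t}\Vert_2 \inprob 0$ through to show the conditional mean converges to $\E[\mathbf{\Phi}_{t,N}Z_t\mid\calH_0]$. Assembling the pieces via a Slutsky-type argument — product of a convergent inverse matrix and a convergent vector — yields $\widehat{\beta}^{(M)}_{t,N,Z_{M,t}} \inprob \beta_{t,N,Z_t}$, completing both claims. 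The care needed is mostly bookkeeping: keeping the randomness in $\mathbf{\Phi}_{t,N}$ (which lives on $S$, independent of $D$) cleanly separated from the randomness in $\widehat{\beta}^{(M)}$ (which lives on $D$), since the final $\Vert\cdot\Vert_2$ is a conditional expectation given $\calH_0$ that integrates only over $S$.
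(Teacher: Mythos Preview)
Your proposal is correct and follows essentially the same route as the paper: reduce to the coefficient convergence $\widehat{\beta}^{(M)}_{t,N,Z_{M,t}} \inprob \beta_{t,N,Z_t}$, handle the Gram matrix $A^{(M)}$ by the law of large numbers plus continuous mapping, and isolate the cross-vector $b^{(M)}$ as the one place requiring the structural hypothesis \eqref{eq:independence_condition}. The only difference is tactical in how $b^{(M)}$ is controlled: where you condition on the future-coordinate $\sigma$-field to get conditional iid-ness, the paper instead splits each summand as $\Phi_{t,j}(S_t^{(i)})z_t(S_t^{(i)}) + \Phi_{t,j}(S_t^{(i)})\epsilon_M^{(i)}$ with $\epsilon_M^{(i)} := z_{M,t}(D_M,S_t^{(i)}) - z_t(S_t^{(i)})$, treats the first piece by the ordinary law of large numbers, and bounds the second via Cauchy--Schwarz together with the distributional identity $\epsilon_M^{(i)} \overset{d}{=} Z_{M,t}-Z_t$ (so that $\E[(\epsilon_M^{(i)})^2\mid\calH_0] = \Vert Z_{M,t}-Z_t\Vert_2^2$).
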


\begin{remark}
Notice that $\widehat{V}_{N,t}^{(M)}(L)$ satisfies \eqref{eq:independence_condition} due to the backwards recursive structure of the LSM algorithm \eqref{eq:LSM_definition}.
\end{remark}

Lemma \ref{lem:simulation_stability_property} is essentially what is needed to prove the induction step in the induction argument used to prove the following result: 

\begin{theorem}\label{thm:simulation_convergence}
For $t=0,\dots,T-1$, let the mappings $\phi_t: L^2(\calH_{t+1}) \to L^2(\calH_{t})$ be $L^2$-continuous and law invariant, let $\widehat{V}_{N,t}^{(M)}(L)$ be given by \eqref{eq:LSM_definition}, and let  
$\widehat{V}_{N,t}(L)$ be given by \eqref{eq:optimal_LSM_definition}. Then, for $t = 0,\dots, T-1$, 
\begin{align*}
\Vert \widehat{V}_{N,t}(L) -\widehat{V}_{N,t}^{(M)}(L)\Vert_{2} \inprob 0 \text { as } M\to\infty. 
\end{align*}
\end{theorem}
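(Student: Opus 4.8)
**The plan is to prove Theorem \ref{thm:simulation_convergence} by backward induction on $t$, starting from $t=T$ and moving down to $t=0$.** The base case is trivial, since both $\widehat{V}_{N,T}(L)$ and $\widehat{V}^{(M)}_{N,T}(L)$ are defined to be zero, so their difference vanishes identically. For the induction step, I would assume that $\Vert \widehat{V}_{N,t+1}(L) - \widehat{V}^{(M)}_{N,t+1}(L)\Vert_2 \inprob 0$ as $M\to\infty$ and aim to deduce the analogous statement at time $t$. The central idea is to set $Z_t := \phi_t(L_{t+1}+\widehat{V}_{N,t+1}(L))$ and $Z_{M,t} := \phi_t(L_{t+1}+\widehat{V}^{(M)}_{N,t+1}(L))$, so that by the definitions \eqref{eq:optimal_LSM_definition} and \eqref{eq:LSM_definition} we have $\widehat{V}_{N,t}(L) = P_{\mathcal{B}_{t,N}}Z_t$ and $\widehat{V}^{(M)}_{N,t}(L) = P^{(M)}_{\mathcal{B}_{t,N}}Z_{M,t}$. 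The theorem then follows immediately from Lemma \ref{lem:simulation_stability_property} applied to this pair, provided its hypotheses are verified.

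**The first hypothesis to check is the $L^2$-convergence $\Vert Z_t - Z_{M,t}\Vert_2 \inprob 0$.** This is where the $L^2$-continuity assumption on $\phi_t$ enters. By the induction hypothesis, $\Vert \widehat{V}_{N,t+1}(L) - \widehat{V}^{(M)}_{N,t+1}(L)\Vert_2 \inprob 0$, and since $L_{t+1}$ is common to both arguments it cancels, so $\Vert (L_{t+1}+\widehat{V}_{N,t+1}(L)) - (L_{t+1}+\widehat{V}^{(M)}_{N,t+1}(L))\Vert_2 \inprob 0$. Applying Definition \ref{def:L2_cont} to $\phi_t$ then yields exactly $\Vert Z_t - Z_{M,t}\Vert_2 \inprob 0$, which is the required input. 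I would also note that $Z_t$ is independent of $D$ (as $\widehat{V}_{N,t+1}(L)$ is, per the remark following \eqref{eq:optimal_LSM_definition}), so $\Vert Z_t\Vert_2 \in \R$ and $Z_t = z_t(S_t) \in L^2(\calF_t)$, matching the form required by the lemma.

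**The second, more delicate hypothesis is the independence/law-invariance condition \eqref{eq:independence_condition}.** Here I would invoke the remark immediately preceding the theorem statement, which asserts that $\widehat{V}^{(M)}_{N,t}(L)$ satisfies \eqref{eq:independence_condition} because of the backward recursive structure of \eqref{eq:LSM_definition}: by construction $\widehat{V}^{(M)}_{N,t}$ depends on the simulated data only through $\{S^{(i)}_u : 1\leq i\leq M,\ u\geq t+1\}$, so evaluating the function $z_{M,t}$ at $S^{(i)}_t$ versus the independent copy $S_t$ leaves its law unchanged, precisely because $S_t$ and the relevant coordinates of $D_M$ are mutually independent and $S_t^{(i)}$ shares the distribution of $S_t$. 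With both hypotheses in place, Lemma \ref{lem:simulation_stability_property} delivers $\Vert P_{\mathcal{B}_{t,N}}Z_t - P^{(M)}_{\mathcal{B}_{t,N}}Z_{M,t}\Vert_2 \inprob 0$, i.e. $\Vert \widehat{V}_{N,t}(L) - \widehat{V}^{(M)}_{N,t}(L)\Vert_2 \inprob 0$, completing the step.

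**I expect the main obstacle to be the careful bookkeeping in verifying \eqref{eq:independence_condition}** rather than any deep analytic difficulty. One must track exactly which blocks of the simulated data each $\widehat{V}^{(M)}_{N,u}$ depends on as the recursion unfolds, and confirm that the dependence on $D_M$ never involves the time-$t$ coordinates, so that the distributional identity \eqref{eq:independence_condition} genuinely holds at each level of the induction. A subtle point worth stating explicitly is that the convergence in probability should be understood with respect to the randomness in $D$, and that the two hypotheses of Lemma \ref{lem:simulation_stability_property} must be re-verified at each time step using the conclusion of the previous step; the law-invariance of $\phi_t$ is what guarantees the idealized quantities $\widehat{V}_{N,t}(L)$ remain functions of $S_t$ alone and independent of $D$, keeping the structural hypotheses intact down the recursion.
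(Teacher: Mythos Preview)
Your proposal is correct and follows essentially the same route as the paper: backward induction with trivial base case, then at each step invoke $L^2$-continuity of $\phi_t$ on the induction hypothesis and feed the result into Lemma~\ref{lem:simulation_stability_property}. Your write-up is in fact more explicit than the paper's about checking the structural hypothesis \eqref{eq:independence_condition} and the $D$-independence of $Z_t$, which the paper handles only via the brief remark preceding the theorem.
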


To summarize, Lemma \ref{lem:Lp_convergence} says that we can theoretically/asymptotically achieve arbitrarily accurate approximations, even when applying the approximation recursively, and Theorem \ref{thm:simulation_convergence} says that we may approach this theoretical best approximation in practice, with enough simulated non-path-dependent data.

\begin{lemma}\label{lem:Lipschitz_implies_L2}
If the mapping $\phi_t:L^2(\calH_{t+1})\to L^2(\calH_t)$ is Lipschitz continuous in the sense that there exists a constant $K>0$ such that 
\begin{align}
|\phi_t(X)-\phi_t(Y)| \leq K\E[|X-Y|\mid \calH_t] \quad \text{for all } X,Y \in L^2(\calH_{t+1}),\label{eq:Lipschitz_cont}
\end{align}
then $\phi_t$ is $L^2$-continuous in the sense of Definition \ref{def:L2_cont}. 
\end{lemma}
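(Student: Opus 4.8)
The plan is to prove something slightly stronger than $L^2$-continuity, namely that $\phi_t$ is Lipschitz with respect to the (possibly random) quantity $\Vert\cdot\Vert_2=\E[|\cdot|^2\mid\calH_0]^{1/2}$: I will establish $\Vert\phi_t(X)-\phi_t(Y)\Vert_2\le K\Vert X-Y\Vert_2$ for all $X,Y\in L^2(\calH_{t+1})$. The claimed $L^2$-continuity in the sense of Definition \ref{def:L2_cont} is then immediate, because such a deterministic multiplicative bound transfers convergence in probability: taking $Y=X_n$, for any $\eps>0$ one has $\P(\Vert\phi_t(X)-\phi_t(X_n)\Vert_2>\eps)\le\P(\Vert X-X_n\Vert_2>\eps/K)\to 0$.

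First I would start from the pointwise hypothesis \eqref{eq:Lipschitz_cont} applied to the pair $X,Y$, square both sides, and invoke the conditional Jensen (equivalently conditional Cauchy--Schwarz) inequality $(\E[|X-Y|\mid\calH_t])^2\le\E[|X-Y|^2\mid\calH_t]$ to pass from the conditional $L^1$ norm in \eqref{eq:Lipschitz_cont} to a conditional $L^2$ quantity, obtaining $|\phi_t(X)-\phi_t(Y)|^2\le K^2\,\E[|X-Y|^2\mid\calH_t]$. Next I would apply $\E[\,\cdot\mid\calH_0]$ to both sides and use the tower property, which is available since $\calH_0\subset\calH_t$, so that the nested conditioning collapses to $\E[\E[|X-Y|^2\mid\calH_t]\mid\calH_0]=\E[|X-Y|^2\mid\calH_0]$. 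This gives $\Vert\phi_t(X)-\phi_t(Y)\Vert_2^2\le K^2\Vert X-Y\Vert_2^2$, and hence the desired norm-Lipschitz bound after taking square roots.

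There is no genuine obstacle here; the argument is a short, direct computation. The only two points requiring a little care are the conditional Jensen step, which is what lets the $L^1$-type bound in \eqref{eq:Lipschitz_cont} control the $L^2$-type quantity defining $\Vert\cdot\Vert_2$, and the bookkeeping of the two levels of conditioning, with $\calH_t$ on the inside and $\calH_0$ on the outside, where the tower property applies precisely because $\calH_0\subset\calH_t$. Since $X$ and $X_n$ both lie in $L^2(\calH_{t+1})$, all conditional expectations involved are finite and well defined, so no integrability issues arise.
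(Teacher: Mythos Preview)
Your proposal is correct and follows essentially the same route as the paper: square the pointwise Lipschitz bound, apply conditional Jensen to pass from $\E[|X-Y|\mid\calH_t]^2$ to $\E[|X-Y|^2\mid\calH_t]$, then take $\E[\cdot\mid\calH_0]$ and use the tower property to obtain $\Vert\phi_t(X)-\phi_t(Y)\Vert_2\le K\Vert X-Y\Vert_2$. Your added remark explicitly deducing convergence in probability of $\Vert\phi_t(X)-\phi_t(X_n)\Vert_2$ from this bound is a minor clarification the paper leaves implicit.
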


\begin{lemma}\label{lem:noname1}
If the conditional monetary risk measure $\rho_t :L^2(\calH_{t+1}) \to L^2(\calH_t)$ is Lipschitz continuous in the sense of \eqref{eq:Lipschitz_cont} with Lipschitz constant $K$, then so is the mapping $\phi_t$ given by \eqref{eq:valuation_mapping_def} with Lipschitz constant $2K$.
\end{lemma}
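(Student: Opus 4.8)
The plan is to show that the mapping $\phi_t$ defined in \eqref{eq:valuation_mapping_def} inherits Lipschitz continuity from $\rho_t$, with the constant merely doubling. I would begin by abbreviating the two ingredients: for fixed $X, Y \in L^2(\calH_{t+1})$, write $a := \rho_t(-X)$ and $b := \rho_t(-Y)$, so that $\phi_t(X) = a - \frac{1}{1+\eta_t}\E[(a-X)^+ \mid \calH_t]$ and similarly for $Y$. The goal is to bound $|\phi_t(X) - \phi_t(Y)|$ by $2K\,\E[|X-Y| \mid \calH_t]$. The natural first step is the triangle inequality, splitting the difference into the contribution from the risk-measure terms, $|a - b|$, and the contribution from the expected-shortfall-type correction terms, $\frac{1}{1+\eta_t}\big|\E[(a-X)^+ \mid \calH_t] - \E[(b-Y)^+ \mid \calH_t]\big|$.

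For the first piece, the Lipschitz hypothesis \eqref{eq:Lipschitz_cont} applied to $\rho_t$ directly gives $|a - b| = |\rho_t(-X) - \rho_t(-Y)| \leq K\,\E[|X-Y| \mid \calH_t]$. For the second piece, I would exploit that $x \mapsto x^+$ is $1$-Lipschitz, so pointwise $|(a-X)^+ - (b-Y)^+| \leq |(a-X) - (b-Y)| \leq |a-b| + |X-Y|$. Taking conditional expectations and using monotonicity, this bounds the correction term by $\frac{1}{1+\eta_t}\big(\E[|a-b|\mid\calH_t] + \E[|X-Y|\mid\calH_t]\big)$. Since $a, b$ are $\calH_t$-measurable, $\E[|a-b|\mid\calH_t] = |a-b|$, which by the previous step is again at most $K\,\E[|X-Y|\mid\calH_t]$. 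Combining, the correction term is bounded by $\frac{1}{1+\eta_t}(K+1)\E[|X-Y|\mid\calH_t]$.

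Assembling the two pieces yields $|\phi_t(X) - \phi_t(Y)| \leq \big(K + \frac{K+1}{1+\eta_t}\big)\E[|X-Y|\mid\calH_t]$. Here is where I would expect the only real subtlety: the claimed constant is exactly $2K$, so I must verify that $K + \frac{K+1}{1+\eta_t} \leq 2K$, i.e. that $\frac{K+1}{1+\eta_t} \leq K$. Because $\eta_t \in (0,1)$ we have $1 + \eta_t > 1$, but that alone only gives $\frac{K+1}{1+\eta_t} < K+1$, which is not quite $\leq K$ for all $K$. This suggests that either a slightly sharper pointwise estimate is available for the $(\cdot)^+$ difference, or that one uses $1 + \eta_t \geq 1$ together with a more careful argument; the cleanest route is likely to note that $\frac{1}{1+\eta_t} < 1$ and to bound the correction term directly by $\E[|(a-X)^+ - (b-Y)^+|\mid\calH_t] \leq K\,\E[|X-Y|\mid\calH_t]$ using a tighter single-inequality split, so that the two contributions sum to at most $2K\,\E[|X-Y|\mid\calH_t]$. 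Thus the principal obstacle is not any deep analytic fact but getting the constant-chasing to land precisely on $2K$; once the pointwise Lipschitz bound on $x\mapsto x^+$ is combined correctly with the factor $\frac{1}{1+\eta_t}<1$, the conclusion follows, and $\phi_t$ is then $L^2$-continuous by Lemma \ref{lem:Lipschitz_implies_L2}.
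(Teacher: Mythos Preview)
Your direct triangle-inequality approach is natural, but it does not reach the constant $2K$, and the gap you flag is real rather than cosmetic. With your notation, the bound you obtain is
\[
|\phi_t(X)-\phi_t(Y)|\le \Big(K+\tfrac{K+1}{1+\eta_t}\Big)\E[|X-Y|\mid\calH_t],
\]
and $K+\tfrac{K+1}{1+\eta_t}\le 2K$ would require $K\eta_t\ge 1$, which fails for instance when $K=1$ and $\eta_t$ is small (note that translation invariance forces $K\ge 1$, but no more). Your suggested repair---bounding $\E[|(a-X)^+-(b-Y)^+|\mid\calH_t]$ directly by $K\E[|X-Y|\mid\calH_t]$---cannot work with only the $1$-Lipschitz property of $x\mapsto x^+$: that property inevitably produces $|a-b|+|X-Y|$, and the $|a-b|$ piece contributes another $K\E[|X-Y|\mid\calH_t]$ rather than something that can be absorbed. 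So as written you prove Lipschitz continuity with constant at most $2K+1$, which is enough for $L^2$-continuity downstream, but not the statement of the lemma.

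The paper avoids this loss by not splitting $\phi_t(X)-\phi_t(Y)$ directly. Instead, with $Z=Y-X$ it uses monotonicity of $\phi_t$ to sandwich $\phi_t(X-|Z|)\le\phi_t(Y)\le\phi_t(X+|Z|)$, and then exploits the structural monotonicity of $R\mapsto R-\tfrac{1}{1+\eta_t}\E[(R-\xi)^+\mid\calH_t]$ (Lemma~\ref{lem:R_monotonicity}) to replace $\rho_t(-(X\pm|Z|))$ by the cruder bound $\rho_t(-X)\pm K\E[|Z|\mid\calH_t]$ \emph{inside} the whole expression. After this replacement, subadditivity of $(\cdot)^+$ applied to $(\rho_t(-X)-X)+(K\E[|Z|\mid\calH_t]-|Z|)$ keeps the ``$|Z|$'' and ``$K\E[|Z|\mid\calH_t]$'' pieces together inside the positive part, so the expectation term contributes only $\tfrac{1}{1+\eta_t}K\E[|Z|\mid\calH_t]$ rather than $\tfrac{K+1}{1+\eta_t}\E[|Z|\mid\calH_t]$. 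The resulting constant $K\big(1+\tfrac{1}{1+\eta_t}\big)$ is strictly below $2K$. The point is that the monotonicity/sandwiching route prevents the risk-measure perturbation $|a-b|$ from being counted once in the leading term \emph{and} a second time inside the correction term, which is exactly the double-counting your triangle-inequality split incurs.
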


The large class of (conditional) spectral risk measures are in fact Lipschitz continuous. These conditional monetary risk measures can be expressed as 
\begin{align}\label{def:spec_risk_meas}
\rho_{t,m}(Y) = -\int_0^1 F^{-1}_{t,Y}(u) m(u)\cald u,
\end{align}
where $m$ is a probability density function that is decreasing, bounded and right continuous, and $F^{-1}_{t,Y}(u)$ is the conditional quantile function 
\begin{align*}
F^{-1}_{t,Y}(u):=\essinf\{y\in L^0(\calH_t):\P(Y\leq y\mid \calH_t)\geq u\}.
\end{align*}
It is well known that spectral risk measures are coherent and includes expected shortfall as a special case.

\begin{lemma}\label{lem:spectral_lipschitz}
If $m$ is a probability density function that is decreasing, bounded and right continuous, then $(\rho_{t,m})_{t=0}^{T-1}$ is a dynamic monetary risk measure in the sense of Definition \ref{def:dynrisk}. Moreover, each $\rho_{t,m}$ is law invariant in the sense of Definition \ref{def:law_invariance} and also Lipschitz continuous with constant $m(0)$. 

\end{lemma}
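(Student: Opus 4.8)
The plan is to verify the three assertions separately, reducing every conditional statement to a pathwise unconditional one by working with a regular conditional distribution $\mu^Y_\omega(\cdot):=\P(Y\in\cdot\mid\calH_t)(\omega)$, which exists since the variables are real-valued. Under this identification $F^{-1}_{t,Y}(u)$ is, for $\P$-a.e.\ $\omega$, the ordinary quantile at level $u$ of $\mu^Y_\omega$, and $F_{t,Z}(x):=\P(Z\le x\mid\calH_t)$ is the corresponding right-continuous conditional cdf. The three facts about conditional quantiles that drive the proof are: (i) translation equivariance, $F^{-1}_{t,Y+\lambda}(u)=F^{-1}_{t,Y}(u)+\lambda$ for $\lambda\in L^2(\calH_t)$; (ii) monotonicity, $Y\le\widetilde{Y}$ implies $F^{-1}_{t,Y}(u)\le F^{-1}_{t,\widetilde{Y}}(u)$; and (iii) dependence on $Y$ only through its conditional law. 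Each is the corresponding elementary property of ordinary quantiles applied $\omega$-wise to $\mu^Y_\omega$.

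The monetary-risk-measure axioms are then immediate. Normalization \eqref{eq:ph_r} holds because $F^{-1}_{t,0}\equiv 0$. Translation invariance \eqref{eq:ti_r} follows from (i) and $\int_0^1 m(u)\,\cald u=1$, giving $\rho_{t,m}(Y+\lambda)=-\int_0^1(F^{-1}_{t,Y}(u)+\lambda)m(u)\,\cald u=\rho_{t,m}(Y)-\lambda$. Monotonicity \eqref{eq:mo_r} follows from (ii) and $m\ge 0$, since integrating $F^{-1}_{t,Y}\le F^{-1}_{t,\widetilde{Y}}$ against $m$ preserves the inequality. Law invariance (Definition \ref{def:law_invariance}) is exactly (iii). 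That each defining integral is a.s.\ finite and that $\rho_{t,m}$ maps into $L^2(\calH_t)$ I would obtain from the Lipschitz estimate below with $X=0$, namely $|\rho_{t,m}(Y)|\le m(0)\,\E[|Y|\mid\calH_t]$, whence $\E[(\E[|Y|\mid\calH_t])^2]\le\E[\E[Y^2\mid\calH_t]]=\E[Y^2]<\infty$ by conditional Jensen and the tower property.

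The substance of the lemma is the Lipschitz bound \eqref{eq:Lipschitz_cont}, and the only real obstacle is an honest conditional $L^1$-transport inequality. The first reduction is easy: since $m$ is decreasing with maximal value $m(0)$,
\begin{align*}
|\rho_{t,m}(X)-\rho_{t,m}(Y)| &\le\int_0^1|F^{-1}_{t,X}(u)-F^{-1}_{t,Y}(u)|\,m(u)\,\cald u\\
&\le m(0)\int_0^1|F^{-1}_{t,X}(u)-F^{-1}_{t,Y}(u)|\,\cald u,
\end{align*}
so it remains to prove $\int_0^1|F^{-1}_{t,X}(u)-F^{-1}_{t,Y}(u)|\,\cald u\le\E[|X-Y|\mid\calH_t]$. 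I would do this in two steps. First, the conditional one-dimensional transport identity
\begin{align*}
\int_0^1|F^{-1}_{t,X}(u)-F^{-1}_{t,Y}(u)|\,\cald u=\int_{\R}|F_{t,X}(x)-F_{t,Y}(x)|\,\cald x,
\end{align*}
which I would prove $\omega$-wise: both sides equal the planar Lebesgue measure of the symmetric difference of the subgraphs $\{(x,u):u\le F_{t,X}(x)\}$ and $\{(x,u):u\le F_{t,Y}(x)\}$, computed by Fubini in the two opposite orders using the Galois relation $u\le F_{t,Z}(x)\iff F^{-1}_{t,Z}(u)\le x$. Second, for the right-hand integral, the pointwise bound $|F_{t,X}(x)-F_{t,Y}(x)|\le\E[|\mathbf{1}\{X\le x\}-\mathbf{1}\{Y\le x\}|\mid\calH_t]$ (conditional Jensen), combined with conditional Tonelli and the elementary identity $\int_{\R}|\mathbf{1}\{a\le x\}-\mathbf{1}\{b\le x\}|\,\cald x=|a-b|$, yields $\int_{\R}|F_{t,X}-F_{t,Y}|\,\cald x\le\E[|X-Y|\mid\calH_t]$. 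Combining the two steps closes the argument, with constant $m(0)$.

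The step needing the most care is the measure-theoretic bookkeeping in the conditional setting: checking that the quantile and cdf representations through $\mu^Y_\omega$, and the pathwise identities, hold simultaneously for $\P$-a.e.\ $\omega$ (so that joint measurability in $(\omega,u)$ is retained and the defining integral is $\calH_t$-measurable), that the Galois relation is available via right-continuity of the conditional cdf, and that the interchange of $\E[\cdot\mid\calH_t]$ with the $\cald x$-integral is licensed by conditional Tonelli. Everything else is a direct transcription of standard unconditional quantile facts.
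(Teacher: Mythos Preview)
Your proposal is correct and complete, but it takes a genuinely different route from the paper for the Lipschitz bound. The paper's argument exploits the (stated but not proved) coherence of spectral risk measures: by subadditivity and monotonicity one has $|\rho_{t,m}(X)-\rho_{t,m}(Y)|\le\rho_{t,m}(-|X-Y|)$, and then a direct computation of $\rho_{t,m}(-|X-Y|)$ using $m\le m(0)$ and $\int_0^1 F^{-1}_{t,|X-Y|}(u)\,\cald u=\E[|X-Y|\mid\calH_t]$ gives the bound in two lines. Your argument instead passes through the conditional $L^1$-Wasserstein distance: you bound $|\rho_{t,m}(X)-\rho_{t,m}(Y)|$ by $m(0)\int_0^1|F^{-1}_{t,X}-F^{-1}_{t,Y}|\,\cald u$ and then prove the transport inequality $\int_0^1|F^{-1}_{t,X}-F^{-1}_{t,Y}|\,\cald u\le\E[|X-Y|\mid\calH_t]$ from scratch via the cdf/quantile duality and conditional Tonelli. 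The paper's approach is considerably shorter but presupposes subadditivity as known; your approach is longer but fully self-contained and in fact yields the slightly sharper intermediate statement that $\rho_{t,m}$ is $m(0)$-Lipschitz with respect to the conditional Wasserstein-$1$ distance. You also verify the monetary axioms, law invariance, and the $L^2$-range explicitly, which the paper's proof leaves implicit.
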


\begin{remark}\label{rem:rate_of_conv}
Assume $\phi_t$ is Lipschitz continuous. Then
\begin{align*}
&||\widehat{V}_t(L)-V_t(L)||_2 \\
&\quad\leq ||\widehat{V}_t(L)-\phi_t(L_{t+1}+\widehat{V}_{t+1}(L))||_2 \\
&\quad\quad+ ||\phi_t(L_{t+1}+V_{t+1}(L))-\phi_t(L_{t+1}+\widehat{V}_{t+1}(L)||_2\\
&\quad \leq ||\widehat{V}_t(L)-\phi_t(L_{t+1}+\widehat{V}_{t+1}(L))||_2 +K||V_{t+1}(L)-\widehat{V}_{t+1}(L)||_2
\end{align*}
Repeating this argument gives 
\begin{align*}
&||\widehat{V}_t(L)-V_t(L)||_2 \leq \sum_{s=t}^TK^{s-t}||\widehat{V}_s(L)-\phi_s(L_{s+1}+\widehat{V}_{s+1}(L)||_2
\end{align*}
This bound is analogous to that in \cite{Zanger-13} (Lemma 2.3, see also Remark 3.4 for how this ties in with the main result) with the exception that the constant $K^{s-t}$ appears instead of $2$. As $K$ may be quite large this is one of the reasons for not seeking to determine the exact rate of convergence, as is done in for instance  \cite{Zanger-18} and \cite{Egloff-05}. This observation also discourages judging the accuracy of the LSM algorithm purely by estimating the out-of-sample one-step estimation errors of the form $||\widehat{V}_s(L)-\phi_s(L_{s+1}+\widehat{V}_{s+1}(L))||_2$, as these need to be quite small in order for a satisfying error bound.
\end{remark}

\subsection{Convergence results using Value-at-Risk}

In this section, we will focus on mappings $\phi_{t,\alpha}:L^2(\calH_{t+1}) \to L^2(\calH_t)$ given by
\begin{align}
\phi_{t,\alpha}(Y) := \VaR_{t,\alpha}(-Y) - \frac{1}{1+\eta_t}\E[(\VaR_{t,\alpha}(-Y) -Y)^+\mid \calH_t]
\label{eq:VaR_phi_def}
\end{align}
for some $\alpha \in (0,1)$ and nonnegative constants $(\eta_t)_{t=0}^{T-1}$, and where 
\begin{align*}
\VaR_{t,\alpha}(-Y)&:=F^{-1}_{t,Y}(1-\alpha)\\
&:=\essinf\{y\in L^0(\calH_t):\P(Y\leq y\mid \calH_t)\geq 1-\alpha\}.
\end{align*}
is the conditional version of Value-at-Risk. Note that $\phi_{t,\alpha}$ is a special case of mappings $\phi$ in \eqref{eq:valuation_mapping_def}. $(\VaR_{t,\alpha})_{t=0}^{T-1}$ is a dynamic monetary risk measure in the sense of Definition \ref{def:dynrisk}, and $\VaR_{t,\alpha}$ is law invariant in the sense of Definition \ref{def:law_invariance}.
Since $\VaR_{t,\alpha}$ is in general not Lipschitz continuous, $\phi_{t,\alpha}$ cannot be guaranteed to be so, without further regularity conditions. The aim of this section is to find results analogous to Lemma \ref{lem:Lp_convergence} and Theorem \ref{thm:simulation_convergence}.

We will use the following Lemma and especially its corollary in lieu of $L^2$-continuity for Value-at-Risk:
\begin{lemma}\label{lem:VaR_subadditivity}
For any $X,Z \in \calH_{t+1}$ and any $\delta \in (0, 1-\alpha)$, 
\begin{align}
&\VaR_{t,\alpha}(-(X+Z)) \leq \VaR_{t,\alpha+\delta}(-X)+\VaR_{t,1-\delta}(-Z)\label{eq:VaR_ineq_1}, \\
&\VaR_{t,\alpha}(-(X+Z)) \geq \VaR_{t,\alpha-\delta}(-X)-\VaR_{t,1-\delta}(-Z)\label{eq:VaR_ineq_2}.
\end{align}
\end{lemma}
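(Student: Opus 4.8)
The plan is to reduce both inequalities to the classical sub-/super-additivity of quantiles, carried out conditionally on $\calH_t$. Writing $F^{-1}_{t,W}(p)=\essinf\{y\in L^0(\calH_t):\P(W\le y\mid\calH_t)\ge p\}$ for the conditional quantile, I would first isolate the two elementary facts that drive everything. Fact (i): the attainment property $\P\big(W\le F^{-1}_{t,W}(p)\mid\calH_t\big)\ge p$ holds a.s., which follows from right-continuity of $y\mapsto\P(W\le y\mid\calH_t)$ together with the essential infimum being attained. Fact (ii): the conditional Fréchet/Bonferroni bound $\P(A\cap B\mid\calH_t)\ge\P(A\mid\calH_t)+\P(B\mid\calH_t)-1$ holds a.s., which is just $\P(A^c\cup B^c\mid\calH_t)\le\P(A^c\mid\calH_t)+\P(B^c\mid\calH_t)$. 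Both inequalities in the lemma are then instances of the same mechanism applied to different summands.

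For the upper bound \eqref{eq:VaR_ineq_1}, I would take $a$ and $b$ to be the conditional quantiles of $X$ and of $Z$ at the two levels appearing on the right-hand side; these are $\calH_t$-measurable. Using the deterministic inclusion $\{X\le a\}\cap\{Z\le b\}\subseteq\{X+Z\le a+b\}$ and combining Fact (i) with Fact (ii), the conditional probability $\P(X+Z\le a+b\mid\calH_t)$ is bounded below by the sum of the two levels minus one; by the essential-infimum characterisation of $F^{-1}_{t,X+Z}$ this lower bound is exactly what is needed to conclude $\VaR_{t,\alpha}(-(X+Z))\le a+b$. The real content of this step is to verify that the two levels supplied by $\alpha+\delta$ and $1-\delta$ do combine, through Fact (ii), to the level governing the left-hand side, and that all three levels are admissible; this is precisely where the constraint $\delta\in(0,1-\alpha)$ must enter, and it is the bookkeeping I would check most carefully.

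For the lower bound \eqref{eq:VaR_ineq_2}, I would apply the very same inequality, but to the pair $X+Z$ and $-Z$, whose sum is $X$. This yields an upper bound on a conditional quantile of $X$ in terms of conditional quantiles of $X+Z$ and of $-Z$, which I then rearrange into a lower bound on the quantile of $X+Z$, thereby producing \eqref{eq:VaR_ineq_2}. Converting the quantile of $-Z$ back into a quantile of $Z$ requires the reflection identity expressing $F^{-1}_{t,-Z}$ through an (upper) conditional quantile of $Z$, and getting the inequality to point the correct way at this step is the delicate point: in the presence of atoms the lower and upper conditional quantiles differ, so one must select the version of the reflected quantile that preserves the direction of the bound.

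I expect the main obstacle to be this measure-theoretic bookkeeping with conditional (hence random) quantiles rather than any deep idea, since the event inclusion and the Fréchet bound themselves are essentially trivial. Concretely, the work lies in verifying a.s. that the essential-infimum definition interacts correctly with the conditional probabilities — attainment in Fact (i), $\calH_t$-measurability of the random thresholds $a,b$, and the atom/reflection direction issue in the second inequality — while keeping the arithmetic of the quantile levels $\alpha$, $\alpha\pm\delta$ and $1-\delta$ consistent throughout and compatible with the admissibility window forced by $\delta\in(0,1-\alpha)$.
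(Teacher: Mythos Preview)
Your approach is essentially the same as the paper's. For \eqref{eq:VaR_ineq_1} the paper fixes $E=\{Z\le \VaR_{t,1-\delta}(-Z)\}$ and bounds $\P_t(X+Z\le y)\ge \P_t(X\le y-\VaR_{t,1-\delta}(-Z))-\P_t(E^c)$, which is precisely your Fr\'echet/Bonferroni inequality $\P_t(A\cap B)\ge\P_t(A)+\P_t(B)-1$ applied to $A=\{X\le a\}$, $B=\{Z\le b\}$ and unpacked slightly differently; for \eqref{eq:VaR_ineq_2} the paper does exactly what you propose, namely apply \eqref{eq:VaR_ineq_1} to the pair $(X+Z,-Z)$ (at level $\alpha-\delta$) and rearrange, so the reflection bookkeeping you anticipate is indeed the only thing to check.
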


We get an interesting corollary from this lemma:
\begin{corollary}\label{cor:VaR_Lipschitz}
Let $\alpha \in (0,1)$ and let $\delta \in (0,1-\alpha)$ with $\delta<1/2$. Then, for any $X,Y \in L^1(\calH_{t+1})$, 
\begin{align*}
\inf_{|\epsilon|<\delta}|\VaR_{t,\alpha+\epsilon}(X)-\VaR_{t,\alpha}(Y)| \leq \frac{1}{\delta}\E[|X-Y|\mid \calH_t]. 
\end{align*}
\end{corollary}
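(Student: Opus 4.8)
The plan is to reduce everything to two ingredients: a conditional Markov bound controlling the ``error'' Value-at-Risk terms, and the sandwiching inequalities of Lemma~\ref{lem:VaR_subadditivity} together with the monotonicity of $\gamma\mapsto\VaR_{t,\gamma}$. Throughout I would write $c:=\frac{1}{\delta}\E[|X-Y|\mid\calH_t]$ and $g(\epsilon):=\VaR_{t,\alpha+\epsilon}(X)-\VaR_{t,\alpha}(Y)$, so that the claim becomes $|g(\epsilon)|\le c$ for suitable $\epsilon$ with $|\epsilon|<\delta$.

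First I would establish the conditional Markov step: $\VaR_{t,1-\delta}(Y-X)\le c$ and $\VaR_{t,1-\delta}(X-Y)\le c$. Using the definitions, $\VaR_{t,1-\delta}(Y-X)=F^{-1}_{t,X-Y}(\delta)$, so it suffices to show that the $\calH_t$-measurable candidate $y_0:=\frac{1}{\delta}\E[|X-Y|\mid\calH_t]$ lies in the set defining the conditional quantile, i.e. $\P(X-Y\le y_0\mid\calH_t)\ge\delta$. By the conditional Markov inequality $\P(|X-Y|\ge y_0\mid\calH_t)\le\delta$, hence $\P(X-Y\le y_0\mid\calH_t)\ge 1-\delta$; since $\delta<1/2$ gives $1-\delta\ge\delta$, the candidate qualifies and $F^{-1}_{t,X-Y}(\delta)\le y_0=c$. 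The same argument applied to $-(X-Y)$ gives the bound for $\VaR_{t,1-\delta}(X-Y)$.

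Next I would feed the decomposition $A=-X$, $B=X-Y$ (so that $-(A+B)=Y$, $-A=X$, $-B=Y-X$) into Lemma~\ref{lem:VaR_subadditivity}. Inequality \eqref{eq:VaR_ineq_1} at base level $\alpha$ becomes $\VaR_{t,\alpha}(Y)\le\VaR_{t,\alpha+\delta}(X)+\VaR_{t,1-\delta}(Y-X)$, which together with the Markov bound yields $g(\delta)\ge -c$; inequality \eqref{eq:VaR_ineq_2} similarly yields $\VaR_{t,\alpha-\delta}(X)-\VaR_{t,\alpha}(Y)\le\VaR_{t,1-\delta}(Y-X)\le c$, i.e. $g(-\delta)\le c$. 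Finally, since $\gamma\mapsto\VaR_{t,\gamma}(X)=F^{-1}_{t,-X}(1-\gamma)$ is non-increasing, $g$ is non-increasing in $\epsilon$; hence for every $\epsilon\in(-\delta,\delta)$ one has $g(\epsilon)\ge g(\delta)\ge -c$ and $g(\epsilon)\le g(-\delta)\le c$, so $|g(\epsilon)|\le c$ and the infimum over $|\epsilon|<\delta$ is at most $c$.

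I expect the conditional Markov estimate to be the main obstacle, in the sense that it is the only place where genuine information is gained: it converts control of the first conditional absolute moment into control of a quantile, and this is possible precisely because $\delta<1/2$ keeps the relevant quantile level in the bulk, away from $1$. The remaining care is purely bookkeeping: tracking signs when passing between $\VaR_{t,\gamma}(\cdot)$ and $F^{-1}_{t,\cdot}(\cdot)$, checking that the shifted levels $\alpha+\delta$, $\alpha-\delta$ and $1-\delta$ remain in $(0,1)$ so that Lemma~\ref{lem:VaR_subadditivity} applies (this is where $\delta<1-\alpha$, and for the lower inequality effectively $\delta<\alpha$, enter), and using monotonicity to transfer the endpoint bounds at $\epsilon=\pm\delta$ to the open interval.
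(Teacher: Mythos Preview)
Your proof is correct and follows essentially the same route as the paper's: both combine the sandwich inequalities of Lemma~\ref{lem:VaR_subadditivity} with a conditional Markov bound on $|X-Y|$ to trap $\VaR_{t,\alpha}(Y)$ between $\VaR_{t,\alpha-\delta}(X)-c$ and $\VaR_{t,\alpha+\delta}(X)+c$. The only cosmetic differences are that the paper first replaces the perturbation $Z=Y-X$ by $|Z|$ via monotonicity before invoking Lemma~\ref{lem:VaR_subadditivity} (so that the Markov step lands on $\VaR_{t,1-\delta}(-|Z|)$ and $\VaR_{t,1-\delta}(|Z|)$), whereas you apply the lemma directly to the signed increment and bound $\VaR_{t,1-\delta}(\pm(X-Y))$ via Markov on $|X-Y|$; and you make explicit the monotonicity of $\epsilon\mapsto g(\epsilon)$ to pass from the endpoint estimates to every $\epsilon\in(-\delta,\delta)$, a step the paper leaves implicit in the phrase ``this immediately yields our desired result.''
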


Using these Lipschitz-like results, we can show a Lipschitz-like result for $\phi_{t,\alpha}(\cdot)$. 

\begin{theorem}\label{thm:phi_Lipschitz}
Let $\alpha \in (0,1)$ and let $\delta \in (0,1-\alpha)$ with $\delta<1/2$. Then, for any $X,Y \in L^1(\calH_{t+1})$, then
\begin{align}
\inf_{|\epsilon|<\delta}|\phi_{t,\alpha+\epsilon}(X)-\phi_{t,\alpha}(Y)| \leq \frac{2}{\delta}\E[|X-Y|\mid\calH_t]
\label{eq:phi_Lipschitz}
\end{align}
and 
\begin{align}
\phi_{t,\alpha-\delta}(X) -\frac{2}{\delta}\E[|X-Y|\mid\calH_t]&\leq \phi_{t,\alpha}(Y) \nonumber \\
&\leq \phi_{t,\alpha+\delta}(X)+\frac{2}{\delta}\E[|X-Y|\mid\calH_t],  \label{eq:phi_interval}
\end{align}
and \eqref{eq:phi_Lipschitz} and \eqref{eq:phi_interval} are equivalent.
\end{theorem}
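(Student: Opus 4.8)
The plan is to collapse both displayed inequalities onto a single scalar map and to prove the two-sided interval bound \eqref{eq:phi_interval} directly, since it is the stronger of the two statements; \eqref{eq:phi_Lipschitz} will then follow. For $W\in L^1(\calH_{t+1})$ and a level $\beta$, write $R_W^\beta:=\VaR_{t,\beta}(-W)=F^{-1}_{t,W}(1-\beta)$ and introduce the $\calH_t$-indexed function
\[
h_W(r):=r-\frac{1}{1+\eta_t}\E[(r-W)^+\mid\calH_t],
\]
so that $\phi_{t,\beta}(W)=h_W(R_W^\beta)$. The argument rests on three elementary properties of this representation, which I would establish first.

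First, $h_W$ is non-decreasing and $1$-Lipschitz: for $r_1\le r_2$, using that $r\mapsto(r-W)^+$ is non-decreasing and $1$-Lipschitz together with $\frac{1}{1+\eta_t}\in(0,1]$, one gets $0\le h_W(r_2)-h_W(r_1)\le r_2-r_1$ (no differentiability of the law of $W$ is needed). Second, a fixed-level law-change estimate: since $\big|(r-X)^+-(r-Y)^+\big|\le|X-Y|$,
\[
|h_X(r)-h_Y(r)|\le\frac{1}{1+\eta_t}\E[|X-Y|\mid\calH_t]\le\E[|X-Y|\mid\calH_t].
\]
Third, the one-sided VaR displacement bounds $R_Y^{\alpha}-R_X^{\alpha+\delta}\le\frac1\delta\E[|X-Y|\mid\calH_t]$ and $R_X^{\alpha-\delta}-R_Y^{\alpha}\le\frac1\delta\E[|X-Y|\mid\calH_t]$, which follow from Lemma \ref{lem:VaR_subadditivity} applied with $Z=Y-X$ (so $X+Z=Y$), combined with the conditional-Markov estimate $\VaR_{t,1-\delta}(-(Y-X))=F^{-1}_{t,Y-X}(\delta)\le\frac1\delta\E[|X-Y|\mid\calH_t]$ underlying Corollary \ref{cor:VaR_Lipschitz}.

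With $A:=\E[|X-Y|\mid\calH_t]$ these combine cleanly. For the upper bound of \eqref{eq:phi_interval},
\[
\phi_{t,\alpha}(Y)=h_Y(R_Y^\alpha)\le h_X(R_Y^\alpha)+A\le h_X(R_X^{\alpha+\delta})+\frac1\delta A+A,
\]
the first step being the law-change estimate and the second using that $h_X$ is non-decreasing and $1$-Lipschitz together with $R_Y^\alpha-R_X^{\alpha+\delta}\le\frac1\delta A$; the lower bound is symmetric, invoking $R_X^{\alpha-\delta}-R_Y^\alpha\le\frac1\delta A$ from \eqref{eq:VaR_ineq_2}. The point I expect to matter most is the constant: bundling the VaR term and the penalty into the single monotone $1$-Lipschitz map $h_W$ means the VaR displacement is paid only once (cost $\frac1\delta A$) and the change of law once (cost $A$), for a total $(\frac1\delta+1)A$, and $\frac1\delta+1\le\frac2\delta$ precisely because $\delta\le1$. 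A cruder accounting that estimates $\VaR$ and penalty separately double-counts the displacement and yields the worse constant $\frac2\delta+1$, so the $h_W$-packaging is the crux.

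Finally, \eqref{eq:phi_Lipschitz} is obtained from \eqref{eq:phi_interval}: since $h_X$ is non-decreasing and $R_X^\beta$ is non-increasing in $\beta$, the map $\epsilon\mapsto\phi_{t,\alpha+\epsilon}(X)=h_X(R_X^{\alpha+\epsilon})$ is non-increasing, so writing $c:=\phi_{t,\alpha}(Y)$ and $C:=\frac2\delta A$, \eqref{eq:phi_interval} gives $\phi_{t,\alpha+\delta}(X)\ge c-C$ and $\phi_{t,\alpha-\delta}(X)\le c+C$; monotonicity then sandwiches $c-C\le\phi_{t,\alpha+\delta}(X)\le\phi_{t,\alpha+\epsilon}(X)\le\phi_{t,\alpha-\delta}(X)\le c+C$ for all $|\epsilon|<\delta$, i.e.\ $|\phi_{t,\alpha+\epsilon}(X)-c|\le C$, which is even stronger than \eqref{eq:phi_Lipschitz}. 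I therefore expect two genuine obstacles: (a) keeping the constant at $\frac2\delta$, for which the $h_W$-packaging above is essential; and (b) the claimed equivalence, since the infimum form is a priori weaker than the interval form for a general monotone function, so \eqref{eq:phi_interval} cannot simply be read off from a symmetric two-sided version of Corollary \ref{cor:VaR_Lipschitz} and must be proved directly as above, with the monotonicity of $\beta\mapsto\phi_{t,\beta}(X)$ supplying the link between the two formulations.
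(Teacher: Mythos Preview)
Your argument is correct and reaches the same conclusion as the paper, but the packaging is genuinely different. The paper sandwiches $Y$ between $X\pm|Z|$ via monotonicity of $\phi_{t,\alpha}$, then applies Lemma~\ref{lem:VaR_subadditivity} to $\VaR_{t,\alpha}(-(X\pm|Z|))$, invokes Lemma~\ref{lem:R_monotonicity} to plug the VaR bound into the full expression, and finally splits the positive-part expectation by subadditivity of $(\cdot)_+$; this yields the constant $\tfrac{2+\eta_t}{(1+\eta_t)\delta}\le\tfrac{2}{\delta}$. Your route instead factors $\phi_{t,\beta}(W)=h_W(R_W^\beta)$ and exploits three clean properties of $h_W$: non-decreasing (this \emph{is} Lemma~\ref{lem:R_monotonicity}), $1$-Lipschitz in $r$, and a law-change estimate $|h_X(r)-h_Y(r)|\le A$. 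This avoids the $X\pm|Z|$ sandwich and the explicit splitting of the penalty term, and gives the constant $\tfrac{1+\delta}{\delta}\le\tfrac{2}{\delta}$. Both constants are bounded by $\tfrac{2}{\delta}$ but are incomparable in general (yours is sharper for small $\delta$, the paper's for small $\eta_t$). Your observation about the claimed equivalence is also well taken: both the paper and you prove \eqref{eq:phi_interval} directly and deduce \eqref{eq:phi_Lipschitz} via the monotonicity of $\beta\mapsto\phi_{t,\beta}(X)$; the reverse implication for a fixed $\delta$ is indeed not immediate, and the paper does not supply a separate argument for it either.
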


Theorem \ref{thm:phi_Lipschitz} enables us to prove $L^2$-continuity of $\phi_{t,\alpha}$ under a continuity assumption. 

\begin{corollary}\label{cor:phi_L2cont}
Consider $X,X_n\in L^2(\calH_{t+1})$, $n\geq 1$, with $X_n \to X$ in $L^2$. Assume that $(0,1) \ni u\mapsto \VaR_{t,u}(-X)$ be a.s. continuous at $u=\alpha$. Then $\phi_{t,\alpha}(X_n) \to \phi_{t,\alpha}(X)$ in $L^2$.

\end{corollary}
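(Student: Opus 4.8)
The plan is to use the interval estimate \eqref{eq:phi_interval} from Theorem \ref{thm:phi_Lipschitz} as a substitute for genuine $L^2$-continuity, and to combine it with the assumed continuity of the level map $u\mapsto\VaR_{t,u}(-X)$ at $u=\alpha$ through a two-stage limiting argument: first send the sample index $n\to\infty$ for a frozen level perturbation $\delta$, and only afterwards send $\delta\to0$.

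First I would instantiate \eqref{eq:phi_interval} with the fixed-level slot occupied by $X_n$ and the perturbed-level slot by the limit $X$; that is, take the theorem's $Y$ to be $X_n$ and its $X$ to be $X$. Writing $B_\delta^{\pm}:=\phi_{t,\alpha\pm\delta}(X)-\phi_{t,\alpha}(X)$ and subtracting $\phi_{t,\alpha}(X)$ throughout, this produces the two-sided bound
\begin{align*}
|\phi_{t,\alpha}(X_n)-\phi_{t,\alpha}(X)| \le |B_\delta^{+}| + |B_\delta^{-}| + \frac{2}{\delta}\E[|X-X_n|\mid\calH_t],
\end{align*}
valid for every admissible $\delta$. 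Since conditional Jensen together with the tower property gives $\Vert\E[|X-X_n|\mid\calH_t]\Vert_2\le\Vert X-X_n\Vert_2\to0$, taking norms and then $\limsup_n$ kills the last term and yields $\limsup_n\Vert\phi_{t,\alpha}(X_n)-\phi_{t,\alpha}(X)\Vert_2\le\Vert B_\delta^{+}\Vert_2+\Vert B_\delta^{-}\Vert_2$ for each fixed $\delta$.

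The second stage is to let $\delta\to0$. The continuity hypothesis gives $\VaR_{t,\alpha\pm\delta}(-X)\to\VaR_{t,\alpha}(-X)$ a.s., hence $(\VaR_{t,\alpha\pm\delta}(-X)-X)^{+}\to(\VaR_{t,\alpha}(-X)-X)^{+}$ a.s.; applying conditional dominated convergence to the integral term in the definition of $\phi_{t,\alpha\pm\delta}$ then shows $B_\delta^{\pm}\to0$ a.s. To upgrade this to $\Vert B_\delta^{\pm}\Vert_2\to0$ I would exhibit a single $L^2$ dominating function valid for all small $\delta$: by monotonicity of $u\mapsto\VaR_{t,u}(-X)=F^{-1}_{t,X}(1-u)$ one has $|\VaR_{t,\beta}(-X)|\le|\VaR_{t,\alpha-\delta_0}(-X)|+|\VaR_{t,\alpha+\delta_0}(-X)|$ for all $|\beta-\alpha|\le\delta_0$, and these sandwiching quantiles lie in $L^2$ because $X\in L^2$ and $\VaR_{t,\beta}$ maps $L^2$ into $L^2$; together with $|X|$ this bounds $|\phi_{t,\beta}(X)|$, and hence $|B_\delta^{\pm}|$, by a fixed $G\in L^2$. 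A final $L^2$ dominated-convergence step then gives $\Vert B_\delta^{\pm}\Vert_2\to0$, and because the left-hand side of the first-stage bound is free of $\delta$, letting $\delta\to0$ forces $\limsup_n\Vert\phi_{t,\alpha}(X_n)-\phi_{t,\alpha}(X)\Vert_2=0$, which is exactly the assertion.

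The step I expect to be the main obstacle is precisely this upgrade in the second stage: the a.s. limit $B_\delta^{\pm}\to0$ must be converted into an $L^2$ limit in a way uniform enough to commute with the $n$-limit already taken. The crucial ingredient is the monotonicity of the quantile level map, which supplies one $\delta$-independent $L^2$ envelope and thereby legitimizes both the conditional dominated convergence (for the positive-part term) and the outer $L^2$ dominated convergence; verifying integrability of the sandwiching quantiles from $X\in L^2$ is the technical point to handle with care.
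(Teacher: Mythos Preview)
Your argument is correct and rests on the same core inequality from Theorem~\ref{thm:phi_Lipschitz} as the paper's proof, but the execution differs in two respects that are worth noting. First, the paper organizes the double limit diagonally: it picks a sequence $\delta_n\to 0$ slowly enough that $\frac{2}{\delta_n}\Vert X-X_n\Vert_2\to 0$, so a single $n\to\infty$ handles both terms at once; your two-stage $\limsup_n$-then-$\delta\to0$ argument is equally valid and arguably cleaner. Second, and more substantively, the paper does not build an explicit $L^2$ envelope for $B_\delta^{\pm}$. Instead it uses that $\alpha\mapsto\phi_{t,\alpha}(X)$ is monotone (a consequence of Lemma~\ref{lem:R_monotonicity}), so $\phi_{t,\alpha-\delta}(X)-\phi_{t,\alpha+\delta}(X)$ is a \emph{decreasing} nonnegative sequence as $\delta\downarrow 0$, and monotone convergence delivers $\Vert\cdot\Vert_2\to 0$ directly; the first term of the sequence automatically serves as the dominating function you work to construct by hand. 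Your explicit envelope via the sandwiching quantiles $\VaR_{t,\alpha\pm\delta_0}(-X)$ is correct, and your identification of the $L^2$-integrability of these quantiles as the crux is exactly right; the paper's monotonicity shortcut simply absorbs that verification into a single observation.
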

The following remark illustrates that even a stronger requirement of a.s. continuous time $t$-conditional distributions should not be a great hindrance in practice:
\begin{remark}
If we add to our cash flow $(L_t)_{t=1}^T$ an adapted process $(\epsilon_t)_{t=1}^T$, independent of $(L_t)_{t=1}^T$, such that for each $t$, $\epsilon_t$ is independent of $\calF_{t-1}$ and has a continuous distribution function, then the assumptions in Corollary \ref{cor:phi_L2cont} will be satisfied.
\end{remark}

We are now ready to formulate a result analogous to Lemma \ref{lem:Lp_convergence}.
\begin{lemma}\label{lem:noname2}
Let $\alpha \in (0,1)$ and let $\delta \in (0,1-\alpha)$ with $\delta<1/2$. Let $(\phi_{t,\alpha})_{t=0}^{T-1}$ be defined by \eqref{eq:VaR_phi_def} and let 
\begin{align}
\widehat{V}_{N,t,\alpha}(L) :=P_{\mathcal{B}_{t,N}}\phi_{t,\alpha}\big(L_{t+1}+\widehat{V}_{N,t+1,\alpha}(L)\big), \quad 
\widehat{V}_{N,T,\alpha}(L) :=0. 
\label{eq:optimal_VaR_LSM_definition}
\end{align}
Let $\bigcup_{n\in \N} \calB_{t,n}$ be dense in the set $\{h(S_t) \mid h:\R^d \to \R,  h(S_t) \in L^2(\calF_t)\}$ and assume that $(0,1) \ni u\mapsto \VaR_{t,u}(-L_{t+1}-\widehat{V}_{N,t+1,\alpha}(L))$ be a.s. continuous at $u=\alpha$ for all $N \in \N, t=0, \dots T-1$.
Then, for $t=0,\dots,T-1$, 
\begin{align*}
\Vert \widehat{V}_{N,t,\alpha}(L) -V_{t,\alpha}(L)\Vert_2 \in \R \text{ and }
\lim_{N\to\infty}\Vert \widehat{V}_{N,t,\alpha}(L) -V_{t,\alpha}(L)\Vert_2=0.
\end{align*}
\end{lemma}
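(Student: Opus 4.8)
The plan is to reproduce the backward-induction proof of Lemma \ref{lem:Lp_convergence}, replacing the single use of $L^2$-continuity there by the Value-at-Risk substitutes of this subsection. First, a bookkeeping remark that settles the assertion $\Vert\cdot\Vert_2\in\R$: both $V_{t,\alpha}(L)$ and every $\widehat V_{N,t,\alpha}(L)$ are independent of the simulated data $D$. For $V_{t,\alpha}$ this is given; for $\widehat V_{N,t,\alpha}$ it follows by downward induction from \eqref{eq:optimal_VaR_LSM_definition}, since $P_{\calB_{t,N}}$ applied to a $D$-independent argument has a nonrandom coefficient vector $\beta_{t,N,\cdot}$ and hence a $D$-independent output. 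Consequently each difference below is independent of $D$, so $\Vert\cdot\Vert_2=\E[|\cdot|^2\mid\calH_0]^{1/2}$ is a deterministic number, and on such arguments $\Vert\cdot\Vert_2$ behaves like an ordinary $L^2$-norm.

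I would then induct downward on $t$, the base case $t=T$ being immediate from $\widehat V_{N,T,\alpha}=0=V_{T,\alpha}$. Assume $\Vert\widehat V_{N,t+1,\alpha}(L)-V_{t+1,\alpha}(L)\Vert_2\to 0$ and set $X_N:=L_{t+1}+\widehat V_{N,t+1,\alpha}(L)$ and $Y:=L_{t+1}+V_{t+1,\alpha}(L)$, so that $\widehat V_{N,t,\alpha}=P_{\calB_{t,N}}\phi_{t,\alpha}(X_N)$ and $V_{t,\alpha}=\phi_{t,\alpha}(Y)$. Inserting $P_{\calB_{t,N}}V_{t,\alpha}$ and using that $P_{\calB_{t,N}}$ is a linear $\Vert\cdot\Vert_2$-contraction gives
\begin{align*}
\Vert\widehat V_{N,t,\alpha}-V_{t,\alpha}\Vert_2\le\Vert\phi_{t,\alpha}(X_N)-\phi_{t,\alpha}(Y)\Vert_2+\Vert P_{\calB_{t,N}}V_{t,\alpha}-V_{t,\alpha}\Vert_2 .
\end{align*}
Because $V_{t,\alpha}$ is, by law invariance and the Markov property, a function of $S_t$ in $L^2(\calF_t)$, the density assumption on $\bigcup_n\calB_{t,n}$ forces the second term to $0$ as $N\to\infty$. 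Everything thus reduces to the propagation term $\Vert\phi_{t,\alpha}(X_N)-\phi_{t,\alpha}(Y)\Vert_2$, where $X_N\to Y$ in $L^2$.

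Here one cannot simply invoke $L^2$-continuity, and it is worth noting that Corollary \ref{cor:phi_L2cont} is not directly applicable either: it requires continuity of $u\mapsto\VaR_{t,u}(-\cdot)$ at the \emph{limit} argument $Y$, whereas the hypothesis supplies this continuity at the \emph{approximants} $-X_N=-L_{t+1}-\widehat V_{N,t+1,\alpha}(L)$. The right tool is therefore the interval estimate \eqref{eq:phi_interval} of Theorem \ref{thm:phi_Lipschitz} applied with $X=X_N$ and $Y=Y$: it pins $V_{t,\alpha}=\phi_{t,\alpha}(Y)$ into the band determined by $\phi_{t,\alpha\pm\delta}(X_N)$ up to an additive error $\frac{2}{\delta}\E[|\widehat V_{N,t+1,\alpha}-V_{t+1,\alpha}|\mid\calH_t]$. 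Combining the two one-sided inequalities yields, for every admissible $\delta$,
\begin{align*}
|\phi_{t,\alpha}(X_N)-V_{t,\alpha}|\le\omega_N(\delta)+\frac{2}{\delta}\E\big[|\widehat V_{N,t+1,\alpha}-V_{t+1,\alpha}|\mid\calH_t\big],
\end{align*}
where $\omega_N(\delta):=\max\big(|\phi_{t,\alpha+\delta}(X_N)-\phi_{t,\alpha}(X_N)|,\,|\phi_{t,\alpha-\delta}(X_N)-\phi_{t,\alpha}(X_N)|\big)$. By the assumed a.s. continuity of $u\mapsto\VaR_{t,u}(-X_N)$ at $\alpha$ (which carries over to $u\mapsto\phi_{t,u}(X_N)$ through the defining formula \eqref{eq:VaR_phi_def}), $\omega_N(\delta)\to 0$ as $\delta\to 0$ for each fixed $N$; and by the induction hypothesis together with the conditional-Jensen contraction $\Vert\E[|Z|\mid\calH_t]\Vert_2\le\Vert Z\Vert_2$, the error term has $\Vert\cdot\Vert_2$ of order $\frac{1}{\delta}\Vert\widehat V_{N,t+1,\alpha}-V_{t+1,\alpha}\Vert_2\to 0$ for each fixed $\delta$.

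The main obstacle is exactly the joint passage $N\to\infty$, $\delta\to0$. Since the error is divided by $\delta$, one must let $\delta=\delta_N\to 0$ slowly enough (for instance $\delta_N\asymp\Vert\widehat V_{N,t+1,\alpha}-V_{t+1,\alpha}\Vert_2^{1/2}$) to keep $\frac{1}{\delta_N}\Vert\widehat V_{N,t+1,\alpha}-V_{t+1,\alpha}\Vert_2\to0$, while simultaneously forcing $\Vert\omega_N(\delta_N)\Vert_2\to0$. The difficulty is that $\omega_N$ is the modulus of continuity at $\alpha$ of the map $u\mapsto\phi_{t,u}(X_N)$ evaluated at the moving argument $X_N$, so fixed-$N$ continuity does not by itself yield convergence along the diagonal; one must argue that these moduli do not degenerate as $N\to\infty$, i.e. that the conditional law of $X_N$ near its $\alpha$-quantile does not sharpen into an atom in the limit. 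Establishing this uniform control — plausibly by relating the approximants' continuity back to that of $Y$, or by extracting it from the $L^2$-convergence $X_N\to Y$ — is where the real content of the argument lies; the contraction property of $P_{\calB_{t,N}}$, the density step, and the $D$-independence bookkeeping are all routine.
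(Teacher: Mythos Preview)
Your proposal is not a complete proof: you carry the backward induction, the contraction property of $P_{\calB_{t,N}}$, and the density step through correctly, but you explicitly leave the decisive point open. After reducing everything to the propagation term and bounding it via \eqref{eq:phi_interval} by
\[
|\phi_{t,\alpha}(X_N)-V_{t,\alpha}|\le \omega_N(\delta)+\tfrac{2}{\delta}\E[|\widehat V_{N,t+1,\alpha}-V_{t+1,\alpha}|\mid\calH_t],
\]
you need $\Vert\omega_N(\delta_N)\Vert_2\to0$ along some diagonal $\delta_N\to0$. You yourself note that pointwise continuity of $u\mapsto\phi_{t,u}(X_N)$ at $\alpha$ for each fixed $N$ does not deliver this without some uniformity in $N$, and you provide none. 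The sentence ``establishing this uniform control\dots is where the real content of the argument lies'' is precisely where a proof should be and is not.

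It is worth contrasting this with the paper's own route, which is a two-line reduction: invoke Corollary~\ref{cor:phi_L2cont} to obtain $L^2$-continuity of $\phi_{t,\alpha}$ at limit points with a.s.\ continuous conditional quantile function, and then copy the proof of Lemma~\ref{lem:Lp_convergence} verbatim. Your observation that this does not quite match the stated hypothesis is sharp: Corollary~\ref{cor:phi_L2cont} requires continuity of $u\mapsto\VaR_{t,u}(-\cdot)$ at the \emph{limit} $Y=L_{t+1}+V_{t+1,\alpha}(L)$, whereas the lemma assumes it only at the \emph{approximants} $X_N=L_{t+1}+\widehat V_{N,t+1,\alpha}(L)$. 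The paper's brief proof does not address this discrepancy; read literally, it needs either the extra assumption of continuity at $Y$ (which is the natural hypothesis and almost certainly what was intended) or exactly the uniform-in-$N$ control that you fail to supply. So the gap you identify is real, but your alternative strategy does not close it either; the cleanest fix is to strengthen the hypothesis to include continuity at the limit and then apply Corollary~\ref{cor:phi_L2cont} directly, as the paper does.
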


\begin{lemma}\label{lem:VaR_L2_continuity}
Let $\alpha\in (0,1)$ and let $(0,1) \ni u\mapsto \VaR_{t,u}(v^\trans \mathbf{\Phi}_{t+1,N})$ be a.s. continuous at $u=\alpha$ for any $v \in \R^N$. Then 
\begin{align*}
\beta_n \inprob \beta \quad \text{implies} \quad 
\big\Vert\phi_{t,\alpha}(\beta^\trans \mathbf{\Phi}_{t+1,N})-\phi_{t,\alpha}(\beta_n^\trans \mathbf{\Phi}_{t+1,N})\big\Vert_ 2 \inprob 0.
\end{align*}
\end{lemma}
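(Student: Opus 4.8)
The plan is to reduce the statement to a two-sided squeeze built from the interval bound \eqref{eq:phi_interval} of Theorem \ref{thm:phi_Lipschitz}, after first translating the hypothesis $\beta_n\inprob\beta$ into $\Vert\cdot\Vert_2$-convergence of the arguments. Write $X:=\beta^\trans\mathbf{\Phi}_{t+1,N}$ and $X_n:=\beta_n^\trans\mathbf{\Phi}_{t+1,N}$. Since $\mathbf{\Phi}_{t+1,N}$ is independent of $D$ while $\beta_n,\beta$ are $\calH_0=\calD$-measurable, a direct computation gives
\begin{align*}
\Vert X_n-X\Vert_2^2=\E\big[((\beta_n-\beta)^\trans\mathbf{\Phi}_{t+1,N})^2\mid\calH_0\big]=(\beta_n-\beta)^\trans A\,(\beta_n-\beta),
\end{align*}
where $A:=\E[\mathbf{\Phi}_{t+1,N}\mathbf{\Phi}_{t+1,N}^\trans]$ is a fixed positive definite matrix. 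Hence $\beta_n\inprob\beta$ yields $\Vert X_n-X\Vert_2\inprob0$, and by conditional Jensen the remainder term $R_n:=\E[|X-X_n|\mid\calH_t]$ satisfies $\Vert R_n\Vert_2\le\Vert X_n-X\Vert_2\inprob0$.

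Next I would apply \eqref{eq:phi_interval} with the pair $(X,X_n)$ playing the roles of $(X,Y)$ there, which for every admissible $\delta$ gives
\begin{align*}
\phi_{t,\alpha-\delta}(X)-\tfrac{2}{\delta}R_n\le\phi_{t,\alpha}(X_n)\le\phi_{t,\alpha+\delta}(X)+\tfrac{2}{\delta}R_n.
\end{align*}
Subtracting $\phi_{t,\alpha}(X)$ and taking absolute values, $|\phi_{t,\alpha}(X_n)-\phi_{t,\alpha}(X)|$ is controlled by $\max\{|\phi_{t,\alpha+\delta}(X)-\phi_{t,\alpha}(X)|,|\phi_{t,\alpha-\delta}(X)-\phi_{t,\alpha}(X)|\}+\tfrac{2}{\delta}R_n$. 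For fixed $\delta$ the last term tends to $0$ in $\Vert\cdot\Vert_2$-probability by the previous paragraph, so everything hinges on showing that the ``$\delta$-oscillation'' $\Vert\phi_{t,\alpha\pm\delta}(X)-\phi_{t,\alpha}(X)\Vert_2$ can be made arbitrarily small by choosing $\delta$ small.

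This last point is where the a.s. continuity hypothesis enters and is the step I expect to be the main obstacle. By the assumed a.s. continuity of $u\mapsto\VaR_{t,u}(\,\cdot\,)$ at $u=\alpha$ (applied, after a freezing/substitution argument, to the $\calH_t$-measurable coefficient $\beta$, so that $\VaR_{t,u}(-\beta^\trans\mathbf{\Phi}_{t+1,N})=[\VaR_{t,u}(-v^\trans\mathbf{\Phi}_{t+1,N})]_{v=\beta}$ is a.s. continuous at $\alpha$), one has $\VaR_{t,\alpha\pm\delta}(-X)\to\VaR_{t,\alpha}(-X)$ a.s. as $\delta\downarrow0$. Feeding this into \eqref{eq:VaR_phi_def} and using that $u\mapsto\VaR_{t,u}(-X)=F^{-1}_{t,X}(1-u)$ is monotone, the integrands $(\VaR_{t,\alpha\pm\delta}(-X)-X)^+$ are dominated for $\delta$ in a fixed neighbourhood by a single $L^2(\calH_t)$ random variable (built from $X\in L^2$ and the two endpoint VaRs $\VaR_{t,\alpha\pm\delta_0}(-X)$); conditional dominated convergence then gives $\phi_{t,\alpha\pm\delta}(X)\to\phi_{t,\alpha}(X)$ a.s., and a second dominated-convergence step upgrades this to $\Vert\phi_{t,\alpha\pm\delta}(X)-\phi_{t,\alpha}(X)\Vert_2\to0$. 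This is precisely the reasoning underlying Corollary \ref{cor:phi_L2cont}.

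Finally I would combine the two limits in the correct order: given $\epsilon>0$, first fix $\delta$ small enough that the $\delta$-oscillation term is below $\epsilon$, then let $n\to\infty$ so that $\tfrac{2}{\delta}\Vert R_n\Vert_2\inprob0$; the squeeze then forces $\Vert\phi_{t,\alpha}(X_n)-\phi_{t,\alpha}(X)\Vert_2\inprob0$. The delicate points are the freezing argument legitimising the use of the per-$v$ continuity hypothesis at a (possibly random) $v=\beta$, and keeping every convergence in the conditional $\Vert\cdot\Vert_2$-in-probability sense throughout; passing to an a.s.\ convergent subsequence of $\beta_n$ and conditioning on $\calD$ can be used to make the interchange of the $\delta\to0$ and $n\to\infty$ limits rigorous.
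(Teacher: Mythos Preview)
Your proposal is correct and follows essentially the same route as the paper: both split $|\phi_{t,\alpha}(X_n)-\phi_{t,\alpha}(X)|$ into a $\delta$-oscillation term $\phi_{t,\alpha+\delta}(X)-\phi_{t,\alpha-\delta}(X)$ controlled via the a.s.\ continuity hypothesis, plus a Lipschitz-like remainder coming from Theorem~\ref{thm:phi_Lipschitz}. The only minor technical differences are that the paper treats $\beta$ as deterministic (so the oscillation norm is nonrandom) and uses monotone convergence on the decreasing nonnegative sequence $\phi_{t,\alpha-\delta_n}(X)-\phi_{t,\alpha+\delta_n}(X)$ rather than your dominated-convergence argument, and it couples $\delta_n\downarrow0$ with $n\to\infty$ along a single sequence rather than your two-stage ``fix $\delta$, then $n\to\infty$'' scheme.
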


\begin{remark}
Lemma \ref{lem:VaR_L2_continuity} can be extended to show the convergence  
\begin{align*}
\big\Vert \phi_{t,\alpha}(L_{t+1}+\beta^\trans \mathbf{\Phi}_{t+1,N})-\phi_{t,\alpha}(L_{t+1}+\beta_n^\trans \mathbf{\Phi}_{t+1,N})\big\Vert_ 2 \inprob 0
\end{align*}
since the vector of basis functions $\mathbf{\Phi}_{t+1,N}$ could contain $L_{t+1}$ as an element. The requirement for convergence is that $u\mapsto \VaR_{t,u}(-L_{t+1}-v^\trans \mathbf{\Phi}_{t+1,N})$ is a.s. continuous at $u=\alpha$.
This requirement could be replaced by the stronger requirement that $x\mapsto \P(L_{t+1}+v^\trans \mathbf{\Phi}_{t+1,N}\mid\calF_t)$ is a.s. continuous.
\end{remark}

We have now fitted $\phi_{t,\alpha}$ into the setting of Theorem \ref{thm:simulation_convergence}.

\begin{theorem}\label{thm:noname3}
Let $u\mapsto \VaR_{t,u}(-L_{t+1}-v^\trans \mathbf{\Phi}_{t+1,N})$ be a.s. continuous at $u=\alpha$ for any $v \in \R^N$. For any $N\in N$ and $t=0,\dots,T$, let $\widehat{V}_{N,t,\alpha}(L)$ be given by \eqref{eq:optimal_VaR_LSM_definition} and define
\begin{align*}
\widehat{V}^{(M)}_{N,t,\alpha}(L) :=P^{(M)}_{\mathcal{B}_{t,N}}\phi_{t,\alpha}\big(L_{t+1}+\widehat{V}^{(M)}_{N,t+1}(L)\big), 
\quad \widehat{V}^{(M)}_{N,T,\alpha}(L) :=0.
\end{align*}
Then, for $t=0,1,\dots,T-1$, $\Vert\widehat{V}_{N,t,\alpha}(L) -\widehat{V}_{N,t,\alpha}^{(M)}(L)\Vert_{2} \inprob 0$ as $M\to\infty$.
\end{theorem}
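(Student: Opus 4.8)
The plan is to argue by backward induction on $t\in\{0,\dots,T-1\}$, in the same spirit as the proof of Theorem \ref{thm:simulation_convergence}. The essential difference is that $\phi_{t,\alpha}$ is not $L^2$-continuous, so I cannot pass from $\Vert\widehat{V}_{N,t+1,\alpha}(L)-\widehat{V}^{(M)}_{N,t+1,\alpha}(L)\Vert_2 \inprob 0$ to convergence of the pre-projection quantities by continuity alone. Instead I will strengthen the induction hypothesis to carry convergence of the regression coefficients, so that the substitute continuity statement of Lemma \ref{lem:VaR_L2_continuity} (in the form given in the remark following it) becomes applicable.

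Write $Z_t := \phi_{t,\alpha}(L_{t+1}+\widehat{V}_{N,t+1,\alpha}(L))$ and $Z_{M,t} := \phi_{t,\alpha}(L_{t+1}+\widehat{V}^{(M)}_{N,t+1,\alpha}(L))$, so that $\widehat{V}_{N,t,\alpha}(L)=P_{\mathcal{B}_{t,N}}Z_t$ and $\widehat{V}^{(M)}_{N,t,\alpha}(L)=P^{(M)}_{\mathcal{B}_{t,N}}Z_{M,t}$. By \eqref{eq:optimal_VaR_LSM_definition} we have $\widehat{V}_{N,t+1,\alpha}(L)=\beta_{t+1,N,Z_{t+1}}^\trans \mathbf{\Phi}_{t+1,N}$ and $\widehat{V}^{(M)}_{N,t+1,\alpha}(L)=(\widehat{\beta}^{(M)}_{t+1,N,Z_{M,t+1}})^\trans\mathbf{\Phi}_{t+1,N}$. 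The hypothesis I carry through the induction is the coefficient convergence $\widehat{\beta}^{(M)}_{t+1,N,Z_{M,t+1}}\inprob\beta_{t+1,N,Z_{t+1}}$ as $M\to\infty$, which already implies the sought value convergence at level $t+1$. For the base case $t=T-1$, both $\widehat{V}_{N,T,\alpha}(L)$ and $\widehat{V}^{(M)}_{N,T,\alpha}(L)$ vanish, so $Z_{T-1}=Z_{M,T-1}=\phi_{T-1,\alpha}(L_T)$ has no dependence on $D$; Lemma \ref{lem:simulation_stability_property} then applies with $\Vert Z_{T-1}-Z_{M,T-1}\Vert_2=0$ and returns both the coefficient convergence $\widehat{\beta}^{(M)}_{T-1,N,Z_{M,T-1}}\inprob\beta_{T-1,N,Z_{T-1}}$ and the value convergence.

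For the induction step I first establish $\Vert Z_t-Z_{M,t}\Vert_2\inprob 0$. Here $Z_t$ and $Z_{M,t}$ differ only through the coefficient vectors $\beta_{t+1,N,Z_{t+1}}$ and $\widehat{\beta}^{(M)}_{t+1,N,Z_{M,t+1}}$ of the common basis $\mathbf{\Phi}_{t+1,N}$, so the coefficient convergence furnished by the induction hypothesis, together with the assumed a.s.\ continuity of $u\mapsto\VaR_{t,u}(-L_{t+1}-v^\trans\mathbf{\Phi}_{t+1,N})$ at $u=\alpha$, lets me invoke the extension of Lemma \ref{lem:VaR_L2_continuity} recorded in the remark following it (the version including $L_{t+1}$), yielding exactly $\Vert Z_t-Z_{M,t}\Vert_2\inprob 0$. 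Since $Z_{M,t}$ satisfies the independence condition \eqref{eq:independence_condition} by the same backward-recursion reasoning as in the remark after Lemma \ref{lem:simulation_stability_property}, I may apply that lemma at level $t$; it delivers simultaneously $\Vert P_{\mathcal{B}_{t,N}}Z_t-P^{(M)}_{\mathcal{B}_{t,N}}Z_{M,t}\Vert_2\inprob 0$, i.e.\ the claim $\Vert\widehat{V}_{N,t,\alpha}(L)-\widehat{V}^{(M)}_{N,t,\alpha}(L)\Vert_2\inprob 0$, and the coefficient convergence $\widehat{\beta}^{(M)}_{t,N,Z_{M,t}}\inprob\beta_{t,N,Z_t}$, the latter closing the induction at the next lower level.

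The main obstacle is precisely the step $\Vert Z_t-Z_{M,t}\Vert_2\inprob 0$, where the absence of $L^2$-continuity of the Value-at-Risk based $\phi_{t,\alpha}$ blocks the direct argument used in Theorem \ref{thm:simulation_convergence}. Getting around it forces the induction to propagate convergence of the regression coefficients rather than merely of the values, since Lemma \ref{lem:VaR_L2_continuity} requires $\widehat{\beta}^{(M)}\inprob\beta$ and not just $L^2$-closeness of its arguments; this also explains why the continuity hypothesis is imposed for every $v\in\R^N$ rather than only at the single optimal coefficient, as $\widehat{\beta}^{(M)}_{t+1,N,Z_{M,t+1}}$ is random and only reaches $\beta_{t+1,N,Z_{t+1}}$ in the limit. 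The remaining point, the verification of \eqref{eq:independence_condition} for $Z_{M,t}$, is routine given the backward recursive structure but must be checked to license Lemma \ref{lem:simulation_stability_property}.
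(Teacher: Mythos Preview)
Your argument is correct and follows essentially the same backward-induction scheme as the paper, invoking Lemma~\ref{lem:VaR_L2_continuity} (in its $L_{t+1}$-augmented form) to obtain $\Vert Z_t-Z_{M,t}\Vert_2\inprob 0$ and then Lemma~\ref{lem:simulation_stability_property} to close the step. Your version is in fact more carefully stated than the paper's: the paper's proof carries only the $L^2$-convergence $\Vert\widehat{V}_{N,t+1,\alpha}(L)-\widehat{V}^{(M)}_{N,t+1,\alpha}(L)\Vert_2\inprob 0$ as its induction hypothesis and then invokes Lemma~\ref{lem:VaR_L2_continuity}, which strictly speaking requires coefficient convergence $\widehat{\beta}^{(M)}\inprob\beta$; you make this explicit by strengthening the hypothesis to the coefficient statement (which Lemma~\ref{lem:simulation_stability_property} delivers anyway). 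The two formulations are equivalent here since $\widehat{\beta}^{(M)}$ is $\calH_0$-measurable and $\mathbf{\Phi}_{t+1,N}$ is independent of $D$, so that $\Vert(\beta-\widehat{\beta}^{(M)})^\trans\mathbf{\Phi}_{t+1,N}\Vert_2^2=(\beta-\widehat{\beta}^{(M)})^\trans\E[\mathbf{\Phi}_{t+1,N}\mathbf{\Phi}_{t+1,N}^\trans](\beta-\widehat{\beta}^{(M)})$ with a positive-definite Gram matrix, but your exposition spares the reader from having to supply this bridge.
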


\section{Implementing and validating the LSM algorithm}\label{sec:numerical_results}

In this section we will test the LSM algorithm empirically for the special case of the mappings $\phi$ being given by $(\phi_{t,\alpha})_{t=0}^{T-1}$ in \eqref{eq:VaR_phi_def}. The LSM algorithm described below, Algorithm \ref{alg:LSM}, will differ slightly from the one previously, in the sense that it will contain the small inefficiency of having two regression steps: One for the $\VaR$ term of the mapping, one for the expected value term. The reason for introducing this split is that it will significantly simplify the validation procedures of the algorithm. Heuristically, we will be able to run a forward simulation where we may test the accuracy of both the $\VaR$ term and the expected value term. 

Let $\mu_{t,t+1}(\cdot,\cdot)$ be the transition kernel from time $t$ to $t+1$ of the Markov process $(S_t)_{t=0}^T$ so that $\mu_{t,t+1}(S_t,\cdot)=\P(S_{t+1}\in \cdot \mid S_t)$.
In order to perform the LSM algorithm below, the only requirements are the ability to efficiently sample a variate $s$ from the unconditional law $\mathcal{L}(S_t)$ of $S_t$ and from the conditional law $\mu_{t,t+1}(\cdot,s)$. 
Recall that the liability cash flow $(L_t)_{t=1}^T$ is assumed to be given by $L_t := g_t(S_t)$, for known functions $(g_t)_{t=1}^T$. 

\begin{algorithm}
\caption{LSM Algorithm}
\begin{algorithmic} \label{alg:LSM}
\STATE{Set $\widehat{\beta}^{(M)}_{T,N,V}:=0$}
\FOR{$t=T-1:0$}
\STATE{Draw independent variables $S^{(1)}_t, \dots S^{(M)}_t$} from $\mathcal{L}(S_t)$
\FOR{$i=1:M$} 
\STATE {Draw independent variables $S^{(i,1)}_{t+1}, \dots, S^{(i,n)}_{t+1}$ from $\mu_{t,t+1}(S^{(i)}_t,\cdot)$}
\STATE {Set $Y^{(i,j)}_{t+1}:=g_{t+1}(S^{(i,j)}_{t+1})+(\widehat{\beta}^{(M)}_{t+1,N,V})^{\trans}\mathbf{\Phi}_{t+1,N}(S^{(i,j)}_{t+1})$, $j=1,\dots,n$}
\STATE{Let $\widehat{F}^{(i)}_t(y):=\frac{1}{n}\sum_{j=1}^n I\{Y^{(i,j)}_{t+1}\leq y\}$ (empirical cdf)}
\STATE{Set $R^{(i)}_t:=\min\{y:\widehat{F}^{(i)}_t(y)\geq \alpha\}$ (empirical $\alpha$-quantile)}
\STATE{Set $E^{(i)}_t:=\frac{1}{n}\sum_{j=1}^n (R^{(i)}_t-Y^{(i,j)}_{t+1})_+$}
\ENDFOR
\STATE{Set $\widehat{\beta}^{(M)}_{t,N,R}$ as in  
\eqref{eq:beta_estimate_def} by regressing $(R_t^{(i)})_{i=1}^M$ onto $(\mathbf{\Phi}_{t,N}(S^{(i)}_{t}))_{i=1}^M$}
\STATE{Set $\widehat{\beta}^{(M)}_{t,N,E}$ as in  
\eqref{eq:beta_estimate_def} by regressing $(E_t^{(i)})_{i=1}^M$ onto $(\mathbf{\Phi}_{t,N}(S^{(i)}_{t}))_{i=1}^M$}
\STATE{Set $\widehat{\beta}^{(M)}_{t,N,V}:=\widehat{\beta}^{(M)}_{t,N,R_t}-\frac{1}{1+\eta}\widehat{\beta}^{(M)}_{t,N,E_t}$}
\ENDFOR
\end{algorithmic}
\end{algorithm}

We may assess the accuracy of the LSM implementation by computing root mean-squared errors (RMSE) of quantities appearing in Algorithm \ref{alg:LSM}. For each index pair $(t,i)$ set $V^{(i)}_t:=R^{(i)}_t-\frac{1}{1+\eta}E^{(i)}_t$. Define the RMSE and the normalized RMSE by 
\begin{align}
\text{RMSE}_{Z,t} &:= \bigg(\frac{1}{M}\sum_{i=1}^M\Big(Z^{(i)}_{t}-(\widehat{\beta}^{(M)}_{t,N,Z})^\trans\mathbf{\Phi}_{t,N}\big(S^{(i)}_{t}\big)\Big)^2\bigg)^{1/2}, \label{eq:RMSE}\\
\text{NRMSE}_{Z,t} &:= \text{RMSE}_{Z,t} \times \bigg(\frac{1}{M}\sum_{i=1}^M {Z^{(i)}_{t}}^2\bigg)^{-1/2}\label{eq:NRMSE},
\end{align}
where $Z$ is a placeholder for $R$, $E$ or $V$.

For each index pair $(t,i)$ consider the actual non-default probability and actual return on capital given by 
\begin{align}
\text{ANDP}^{(i)}_t&:=\widehat{F}^{(i)}_t\big((\widehat{\beta}^{(M)}_{t,N,R})^{\trans}\mathbf{\Phi}_{t,N}(S^{(i)}_{t}))\big),\label{eq:ANDP}\\
\text{AROC}^{(i)}_t&:=(1+\eta)E^{(i)}_t\times \big((\widehat{\beta}^{(M)}_{t,N,E})^{\trans}\mathbf{\Phi}_{t,N}(S^{(i)}_{t})\big)^{-1}),\label{eq:AROC}
\end{align}
and note that these random variables are expected to be centered around $\alpha$ and $1+\eta$, respectively, if the implementation is accurate. All validation procedures in this paper are performed out-of-sample, i.e. we must perform a second validation run to get these values. 

\subsection{Models}

In this section we will introduce two model types in order to test the performance of the LSM algorithm. The first model type, introduced in Section \ref{sec:GARCH}, is not motivated by a specific application but is simply a sufficiently flexible and moderately complex time series model.The second model type, introduced in Section \ref{sec:life_model}, aims to describe the cash flow of a life insurance portfolio paying both survival and death benefits.   

\subsubsection{AR(1)-GARCH(1,1) models}\label{sec:GARCH}

The first model to be evaluated is when the liability cash flow $(L_t)_{t=1}^T$ is assumed to be given by a process given by an AR(1) model with GARCH(1,1) residuals, with dynamics given by:
\begin{align*}
L_{t+1} = \alpha_0 + \alpha_1L_{t} + \sigma_{t+1}\epsilon_{t+1}, \quad
\sigma^2_{t+1} =\alpha_2 +\alpha_3\sigma^2_{t}+ \alpha_4L^2_{t}, \quad
L_0=0, \sigma_1=1.
\end{align*}
Here $\epsilon_1, \dots, \epsilon_T$ are assumed to be i.i.d. standard normally distributed and $\alpha_0, \dots \alpha_4$ are known model parameters. If we put $S_t = (L_t,\sigma_{t+1})$ for $t=0,\dots, T$, we see that $S_t$ will form a time homogeneous Markov chain. 

In order to contrast this model with a more complex model, we also investigate the case where the process $(L_t)_{t=1}^T$ is given by a sum of independent AR(1)-GARCH(1,1)-processes of the above type: $L_t = \sum_{i=1}^{10} L_{t,i}$, where
\begin{align*}
L_{t+1,i} = \alpha_{0,i} + \alpha_{1,i}L_{t,i} + \sigma_{t+1,i}\epsilon_{t+1,i}, \quad
\sigma^2_{t+1,i} =\alpha_{2,i} +\alpha_{3,i}\sigma^2_{t,i}+ \alpha_{4,i}L^2_{t,i}.
\end{align*}
The motivation for these choices of toy models is as follows: Firstly, a single AR(1)-GARCH(1,1) process is sufficiently low dimensional so we may compare brute force approximation with that of the LSM model, thus getting a real sense of the performance of the LSM model. Secondly, despite it being low dimensional, it still seems to have a sufficiently complex dependence structure as not to be easily valued other than by numerical means. 
The motivation for looking at a sum of AR(1)-GARCH(1,1) processes is simply to investigate whether model performance is severely hampered by an increase in dimensionality, provided a certain amount of independence of the sources of randomness. 

\subsubsection{Life insurance models}\label{sec:life_model}

In order to investigate a set of models more closely resembling an insurance cash flow, we also consider an example closely inspired by that in \cite{Delong-19}. Essentially, we will assume the liability cash flow to be given by life insurance policies where we take into account age cohorts and their sizes at each time, along with financial data relevant to the contract payouts.

We consider two risky assets $Y$ and $F$, given by the log-normal dynamics
\begin{align*}
&\cald Y_t = \mu_Y Y_t \cald t + \sigma_Y Y_t \cald W^{Y}_t,\quad 0 \leq t\leq T,  \quad Y_0 =y_0,\\
&\cald F_t = \mu_F F_t \cald t + \sigma_F F_t \cald W^{F}_t, \quad 0 \leq t\leq T,\quad F_0 =f_0.
\end{align*}
$W^{Y}_t$ $ W^{F}_t$ are two correlated Brownian motions, which we may re-write as
\begin{align*}
&W^{Y}_t = W^1_t, \quad W^F_t = \rho W^1_t + \sqrt{1 - \rho^2}W^2_t, \quad 0 \leq t\leq T,
\end{align*}
where $W^1$ and $W^2$ are two standard, uncorrelated Brownian motions. Here, $F$ will represent the index associated with unit-linked contracts and  $Y$ will represent assets owned by the insurance company. Furthermore, we assume that an individual of age $a$ has the probability $1-p_a$ of reaching age $a+1$, where the probabilities $p_a$ for $a=0,1, \dots $ are assumed to be nonrandom and known. All deaths are assumed to be independent of each other. We will consider $k$ age-homogeneous cohorts of sizes $n_1, \dots, n_k$ at time $t=0$ and ages $a_1, \dots, a_k$ at time $t=0$. We assume that all insured individuals have bought identical contracts. If death occurs at time $t$, the contract pays out the death benefit $\max(D^*,F_t)$, where $D^*$ is a nonrandom guaranteed amount. If an insured person survives until time $T$, the survival benefit $\max(S^*,F_T)$, where again $S^*$ is a nonrandom amount. We finally assume that the insurance company holds the nominal amount $c(n_1+ \dots +n_k)$ in the risky asset Y and that they will sell off these assets proportionally to the amount of deaths as they occur, and sell off the entire remaining amount at time $T$. Let $N^i_t$ denote the number of people alive in cohort $i$ at time $t$, with the following dynamics:
\begin{align*}
&N^i_{t+1} \sim \text{Bin}(N^i_{t}, 1-p_{a_i+t}), \quad t=0, \dots, T-1.
\end{align*}
These are the same dynamics as the life insurance example in Section 5 of \cite{Engsner-Lindholm-Lindskog-17}. Thus, the liability cash flow we consider here is given by
\begin{align*}
L_t &= \big(\max(D^*,F_t) - cY_t\big)\sum_{i=1}^k(N^i_{t}-N^i_{t-1})\\
&\quad + \calI\{t=T\}\big(\max(S^*,F_T) - cY_T\big)\sum_{i=1}^k N^i_{T} 
\end{align*}
If we write $S_t = (Y_t, F_t, N^1_t, \dots, N^k_t)$, then $S:=(S_t)_{t=0}^T$ will be a Markov chain with dynamics outlined above. Note that depending on the number $k$ of cohorts, $S$ might be a fairly high-dimensional Markov chain. Note that in addition to the obvious risk factors of mortality and the contractual payout amounts, there is also the risk of the value of the insurance company's risky asset $Y$ depreciating in value, something which is of course a large risk factor of insurance companies in practice. Here we will consider the case of $k=4$ cohorts, referred to as the small life insurance model and the case $k=10$ cohorts, referred to as the large life insurance model. 

\subsection{Choice of basis functions}

So far, the choice of basis functions has not been addressed. As we are trying to numerically calculate some unknown functions we do not know the form of, the approach used here will be a combination of standard polynomial functions, completed with functions that in some ways bear resemblance to the underlying liability cash flow. A similar approach for the valuation of American derivatives is taken in for instance \cite{Longstaff-Schwartz-01} and \cite{Brodie-15}, where in the latter it is explicitly advised (see pp.~1082) to use the value of related, simpler, derivatives as  basis functions to price more exotic ones.

In these examples, we will not be overly concerned with model sparsity, covariate significance or efficiency, but rather take the machine-learning approach of simply evaluating models based on out-of-sample performance. This is feasible due to the availability of simulated data for both fitting and out-of-sample validation.

\subsubsection{AR(1)-GARCH(1,1) models}

Since the AR(1)-GARCH(1,1) models can be considered toy models, generic basis functions were chosen. For a single AR(1)-GARCH(1,1) model, the choice of basis functions was all polynomials of the form $L_{t}^i\sigma_{t+1}^j$ for all $0 < i+j \leq 2$.
For the sum of $10$ independent AR(1)-GARCH(1,1) models we denote by $L_t, \sigma_{t+1}$ the aggregated liability cash flow and standard deviation at time $t$ and $t+1$, respectively. Then we consider the basis functions consisting of the state vector $(L_{t,i},  \sigma_{t,i})_{i=1}^{10}$ along with $L_{t}^i\sigma_{t+1}^j$ for all $0 < i+j \leq 2$, omitting the case of $i=1$, $j=0$ to avoid collinearity. Note that the number of basis functions grow linearly with the dimensionality of the state space, rather than quadratically. 

\subsubsection{Life insurance models}

For the state $S_t = (Y_t, F_t, N^1_t, \dots, N^k_t)$, let $p^i_{t+1}$ be the probability of death during $(t, t+1)$ for an individual in cohort $i$, with $q^i_{t+1}:= 1- p^i_{t+1}$. We then introduce the state-dependent variables
\begin{align*}
\mu_{t+1} := \sum_{i=1}^k  N^i_tp^i_{t+1}, \quad
\sigma_{t+1} := \Big(\sum_{i=1}^k  N^i_tp^i_{t+1}q^i_{t+1} \Big)^{1/2}, \quad
N_{t} := \sum_{i=1}^k  N^i_t.
\end{align*}
The first two terms here are the mean and standard deviation of the number of deaths during $(t, t+1)$, the third simply being the total number of people alive at time $t$. The basis functions we choose consist of the state vector $Y_t, F_t, N^1_t, \dots, N^k_t$ together with all products of two factors where the first factor is an element of the set $\{\mu_{t+1}, \sigma_{t+1}, N_{t}\}$ and the other factor is an element of the set
\begin{align*}
&\big\lbrace Y_t, F_t, Y^2_t, F^2_t , F_t^3, Y_tF_t, Y_tF_ t^2,(F_t - K_j)_+, (F_t - K_j)_+Y_t, \\
&\quad C(F_t,S^*  ,T,t) , C(F_t,D^*,t+1,t) , C(F_t,S^*  ,T,t)Y_t , C(F_t,D^*,t+1,t)Y_t\big\}.
\end{align*}
$K_j$ can take values in $\{200,162,124,103\}$ depending on which covariates of the form $(F_t - K_j)_+$ had the highest $R^2$-value at time $T=5$. Here the $R^2$-values were calculated based on the residuals after performing linear regression with respect to all basis function not containing elements of the form $(F_t - K_j)_+$. While this is a somewhat ad hoc approach that could be refined, it is a simple and easy to implement example of basis functions. Again note that the number of basis functions grow linearly with the dimensionality of the state space, rather than quadratically.

\subsubsection{Run specifications}\label{seq:run_specs}

For Algorithm \ref{alg:LSM}, $M=5\cdot 10^4$ and $n = 10^5$ were chosen for the life insurance models and $M=10^4$ and  $n = 10^5$ for the AR(1)-GARCH(1,1) models. Terminal time $T = 6$ was used in all cases. For the validation run, $M=10^4$ and $n = 10^5$ were chosen for all models. Due to the extreme quantile level involved, and also based on empirical observations, it was deemed necessary to keep $n$ around this order of magnitude. Similarly, in part due to the number of basis functions involved, it was observed as well that performance seemed to increase with $M$. The choice of $M$ and $n$ to be on the considered order of magnitude was thus necessary for good model performance, and also the largest orders of magnitude that was computationally feasible given the computing power available.

For the AR(1)-GARCH(1,1) model, the chosen parameters were 
$$\alpha_0 =1 , \quad \alpha_1 =1, \quad \alpha_2 =0.1, \quad \alpha_3 =0.1, \quad \alpha_4 =0.1.$$ 
The same choice was used for each of the terms in the sum of $10$ AR(1)-GARCH(1,1) processes, making the model a sum of i.i.d. processes. 

For the life insurance models, the choice of parameters of the risky assets was $\mu_Y=\mu_F=0.03, \sigma_Y=\sigma_F =0.1, \rho=0.4, y_0=f_0 =100$. The benefit lower bounds were chosen as $D^* = 100, S^* =110$. The death/survival probabilities were calculated using the Makeham formula (for males):
\begin{align*}
p_a = \exp \Big\{-\int_{a}^{a+1} \mu_x \cald x\Big\} \quad
\mu_x :=0.001 + 0.000012\exp\{0.101314 x\}.
\end{align*}
These numbers correspond to the Swedish mortality table M90 for males (the formula for females is identical, but adjusted backwards by $6$ years to account for the greater longevity in the female population). For the case of $4$ cohorts, starting ages (for males) were $50-80$ in $10$-year increments and for the case of $10$ cohorts the starting ages were $40-85$ with $5$-year increments.

The algorithms were run on a computer with 8 Intel(R) Core(TM) i7-4770S  3.10GHz processors, and parallel  programming was implemented in the nested simulation steps in both Algorithm \ref{alg:LSM} and the validation algorithm.

\subsection{Numerical results}

The RMSE:s and NRMSE:s of the LSM models can be seen in Table \ref{tab:tab1} (RMSE:s) and Table \ref{tab:tab2} (NRMSE:s). 
The ANDP:s and AROC:s of the LSM models can be seen in Table \ref{tab:tab3}.
The tables display quantile ranges with respect to the $2.5 \%$ and $97.5 \%$ quantiles of the data. 

\begin{table}
\begin{center}
    \begin{tabular}{ | p{3.5cm} | p{2.5cm} | p{2.5cm} | p{2.5cm} |}
    \hline
    Model & RMSE V & RMSE R & RMSE E  \\ \hline
    \text{one single} \text{AR(1)-GARCH(1,1)}  & 0.0114, 0.0118, 0.0115, 0.0098, 0.0061 & 0.0533, 0.0556, 0.0553, 0.0455, 0.0285 & 0.0521, 0.0542, 0.0544, 0.0444, 0.0279 \\ \hline
    \text{a sum of $10$} \text{AR(1)-GARCH(1,1)}  & 0.0172, 0.0130, 0.0120, 0.0100, 0.0061 & 0.0525, 0.0552, 0.0546, 0.0467, 0.0278 & 0.0536, 0.0544, 0.0535, 0.0458, 0.0273 \\ \hline
    \text{Life model} \text{with 4 cohorts} & 134.4, 120.3, 134.8,  85.1,  75.5& 760.3, 682.7, 901.6, 535.9, 575.1 & 742.4, 665.0, 856.8, 536.0, 571.3 \\ \hline
    \text{Life model} \text{with 10 cohorts} & 331.9, 307.4, 330.8, 226.2, 219.3 & 1730.1, 1719.4, 2148.3, 1431.0, 1928.3 & 1689.2, 1672.8, 2049.8, 1429.4, 1910.3 \\ \hline
    \end{tabular}
    \caption{RMSE values for the quantities $V, R, E$ as defined in \eqref{eq:RMSE}. The five values in each cell are for times $t=1, 2, 3, 4, 5$, in that order.}
    \label{tab:tab1}
\end{center}
\end{table}

\begin{table}
\begin{center}
    \begin{tabular}{ | p{3.5cm}  | p{2.5cm} | p{2.5cm}  |p{2.5cm} |}
    \hline
    Model & NRMSE V (\%)& NRMSE R (\%)& NRMSE E (\%) \\ \hline
    \text{one single} \text{AR(1)-GARCH(1,1)}  & 0.0498, 0.0583, 0.0685, 0.0797, 0.0901 &0.1705, 0.1931, 0.2170, 0.2333, 0.2544 & 0.5810, 0.5962, 0.5930, 0.5747, 0.5885 \\ \hline
    \text{a sum of $10$} \text{AR(1)-GARCH(1,1)} &0.0758, 0.0642, 0.0715, 0.0813, 0.0912&  0.1693, 0.1913, 0.2158, 0.2393, 0.2493 & 0.6049, 0.5963, 0.5880, 0.5949, 0.5779 \\ \hline
    \text{Life model} \text{with 4 cohorts} &  0.2567 0.2225 0.2603 0.1711 0.1647 &  0.5443 0.5646 0.8722 0.5924 0.7403 & 0.7109 0.7535 1.1381 0.8306 1.0271 \\ \hline
    \text{Life model} \text{with 10 cohorts} & 0.2505, 0.2247, 0.2454, 0.1720, 0.1733 &  0.4911, 0.5592, 0.7948, 0.5952, 0.9056 &  0.6475, 0.7452, 1.0499, 0.8351, 1.2580 \\ \hline
    \end{tabular}
    \caption{NRMSE values for the quantities $V, R, E$ as defined in \eqref{eq:NRMSE}. The five values in each cell are for times $t=1, 2, 3, 4, 5$, in that order.}
    \label{tab:tab2}    
\end{center}
\end{table}

\begin{table}
\begin{center}
    \begin{tabular}{ | p{3.5cm}  | p{2.5cm} | p{2.5cm} |}
    \hline
    Model & QR  ANDP ($2.5 \%$, $97.5 \%$) & QR AROC ($2.5 \%$, $97.5 \%$) \\ \hline
    \text{one single} \text{AR(1)-GARCH(1,1)} & (0.457, 0.544), (0.456, 0.545),  (0.457, 0.545), (0.458, 0.545), (0.457, 0.543) & (4.79, 7.22), (4.76, 7.25), (4.78, 7.22), (4.84, 7.24), (4.80, 7.20) \\ \hline
    \text{a sum of $10$} \text{AR(1)-GARCH(1,1)} &  (0.458, 0.545), (0.456, 0.545), (0.457, 0.545), (0.456, 0.546), (0.457, 0.544) & (4.74, 7.29), (4.77, 7.26), (4.77, 7.23), (4.79, 7.25), (4.81, 7.21) \\ \hline
    \text{Life model} \text{with 4 cohorts} &  (0.454, 0.548), (0.443, 0.565), (0.385, 0.622), (0.436, 0.571), (0.387, 0.603)  &  (4.59, 7.43), (4.32, 7.82), (2.71, 9.05), (4.10, 7.93), (2.69, 8.49)  \\ \hline
    \text{Life model} \text{with 10 cohorts} &  (0.457, 0.546), (0.444, 0.560), (0.391, 0.611), (0.435, 0.569), (0.394, 0.605) & (4.66, 7.37), (4.33, 7.68), (2.94, 8.97), (4.08, 7.89), (2.96, 8.58)  \\ \hline
    \end{tabular}
    \caption{ Quantile ranges for the samples $(1-\text{ANDP}_{t}^{(i)})_{i=1}^{M}$ and $(\text{AROC}_{t}^{(i)})_{i=1}^{M}$, as defined in \eqref{eq:ANDP} and \eqref{eq:AROC}. The quantiles considered are $2.5 \%$ and $97.5\%$. The five intervals in each cell are for times $t=1, 2, 3, 4, 5$, in that order.}
    \label{tab:tab3}    
\end{center}
\end{table}

Below, in Figure \ref{fig:histograms} we also present some histograms of the actual returns and risks of ruin, in order to get a sense of the spread of these values.

\begin{figure}[htp]
  \centering
  \begin{tabular}{cc}

    \includegraphics[width=60mm]{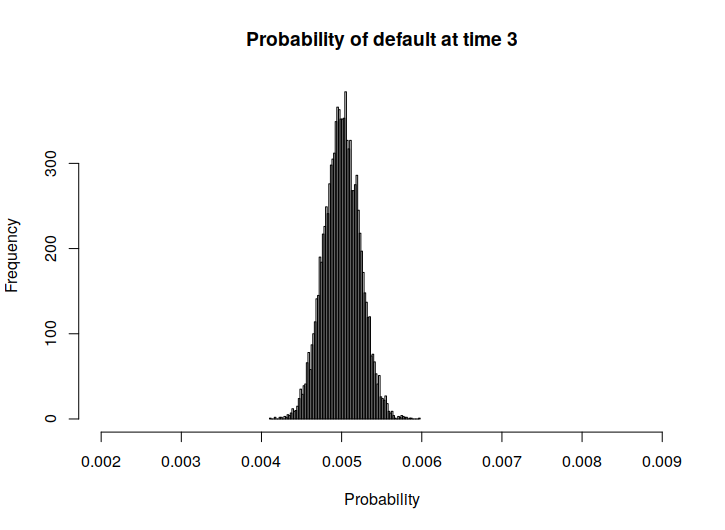}&

    \includegraphics[width=60mm]{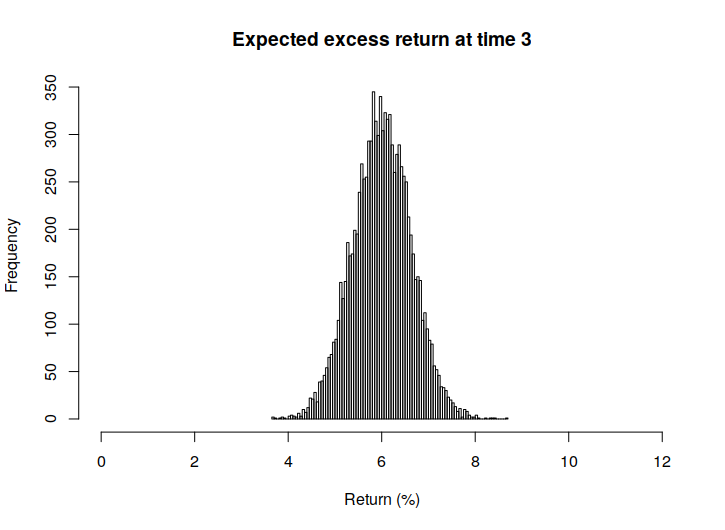}\\

    \includegraphics[width=60mm]{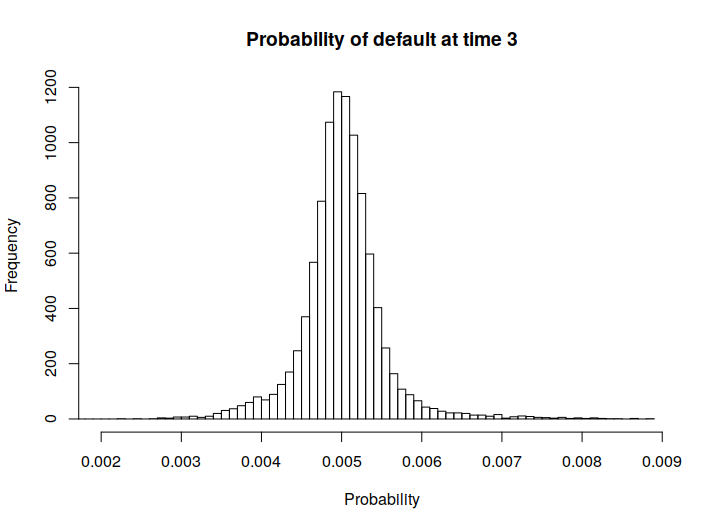}&

    \includegraphics[width=60mm]{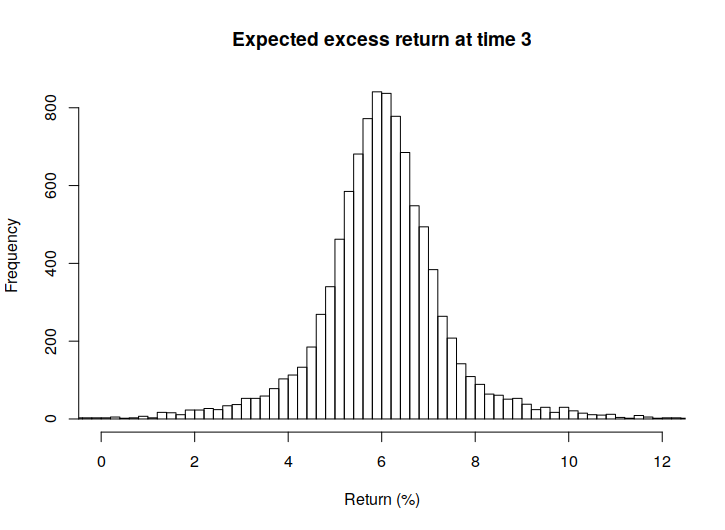}\\

  \end{tabular}
\caption{The top two figures correspond to the AR(1)-GARCH(1,1) model. The bottom two figures correspond to the life insurance model with 10 cohorts.}
\end{figure}\label{fig:histograms}

From these we can observe that the quantity representing the actual returns seems to be quite sensitive to model errors, if we recall the rather small size of the RMSE values.

\begin{table}
\begin{center}
    \begin{tabular}{ | p{3.5cm}  | p{2.5cm} | p{2.5cm} |}
    \hline
    Model & Running time valuation (HH:MM)& Running time validation (HH:MM) \\ \hline
    \text{one single} \text{AR(1)-GARCH(1,1)} & 00:06 & 00:10 \\ \hline
    \text{a sum of 10} \text{AR(1)-GARCH(1,1)}  & 00:33 &  00:39 \\ \hline
    \text{Life model} \text{with 4 cohorts} &  12:48  &  02:30 \\ \hline
    \text{Life model} \text{with 10 cohorts} &  13:29 & 02:44 \\ \hline
    \end{tabular}
    \caption{Run time of each model in hours and minutes. Run specifications are described in section \ref{seq:run_specs}}
    \label{tab:running_times}
\end{center}
\end{table}

Table \ref{tab:running_times} displays the running times of each model. As far as is known, the main factor determining running time of Algoritm \ref{alg:LSM} is the repeated calculation of the basis functions inside the nested simulation (required to calculate the quantities $Y_{t+1}^{i,j}$ in the inner for-loop). As these are quite many for models with high-dimensional state spaces, we see that running times increase accordingly. It should be noted that Algorithm \ref{alg:LSM} was not implemented to run as fast as possible for any specific model, other than the implementation of parallel programming. Speed could potentially be gained by adapting Algorithm \ref{alg:LSM} for specific models of interest.

Some conclusions can be drawn from the numerical results. Firstly, we can see that from a mean-squared-error point of view, the LSM model seems to work well in order to capture the dynamics of the multiperiod cost-of-capital valuation. It should be noted that the (N)RMSE of the value $V$ is lower than those of $R$ and $E$ across the board for all models and times. Since the expression for $E$ is heavily dependent of $R$, we can suspect that estimation errors of $R$ and $E$ are positively correlated, and thus that $V = R -\frac{1}{1+\eta}E$ gets lower mean squared errors as a result.

We can see that increasing model complexity for the AR(1)-GARCH(1,1) and life insurance models seems to have no significant effect on LSM performance. It should be noted that model complexity in both cases is increased by introducing independent stochastic factors; A sum of i.i.d. processes in the AR(1)-GARCH(1,1) case and the adding of (independent) cohorts in the life insurance case. Thus the de-facto model complexity might not have increased much, even though the state-space of the markov process is increased. 

When we look at the ANDP and AROC quantities, we see that these seem to vary more than do the (N)RMSE:s. Especially AROC, which is defined via a quotient, seems to be sensitive to model error.

One important thing to note with regards to sensitivity of ANDP and AROC is the presence of errors introduced by the necessity of having to use Monte-Carlo simulations in order to calculate samples of $\VaR_{t,1-\alpha}(-\cdot), \E[(\cdot)_+]$. This can be seen in the AR(1)-GARCH(1,1) case: If we investigate what the value $V_5(L)$ should be, we see that in this case it has a closed form (using positive homogeneity and translation invariance):
\begin{align*}
\phi_{5}(L_6 +V_6(L)) = \phi_{5}(\alpha_0 + \alpha_1L_5 + \sigma_6\epsilon_6) =  \alpha_0 + \alpha_1L_5 +\sigma_6\phi_{5}(\epsilon_6).
\end{align*}
$\phi_{5}(\epsilon_6)$ is deterministic due to law invariance. Since $L_t$ and $\sigma_t$ are included in the basis functions for the AR(1)-GARCH(1,1) model, we would expect the fit in this case to be perfect. Since it is not, we conclude that errors still may appear even if optimal basis functions are among our selection of basis functions. 

Finally, if we recall that the main purpose of these calculations is to calculate the quantity $V_t(L)$, a good approach for validation might be to re-balance the LSM estimates of $R^{(i)}_t$ and $E^{(i)}_t$ so that the LSM estimate of the value $V_t$ remains unchanged, but the LSM estimates are better fitted to $R^{(i)}_t$ and $E^{(i)}_t$. This re-balancing would not be problematic in the economic model that this validation scheme is played out in. However, in this paper we were also interested in how the LSM model captures both the VaR term and the expected value term, so the quantities ANDP and AROC remain relevant to look at.

\section{Conclusion}

We have studied the performance of the LSM algorithm to numerically compute recursively defined objects such as  $(V_t(L,\phi))_{t=0}^T$ given in definition \ref{def:phi_value}, where the mappings $\phi$ are either $L^2$-continuous or are given by \eqref{eq:VaR_phi_def}. As a part of this study, Lipschitz-like results and conditions for $L^2$-continuity were established for Value-at-Risk and the associated operator $\phi_{t,\alpha}$ in Theorem \ref{thm:phi_Lipschitz} and Corollary \ref{cor:phi_L2cont}. Important basic consistency results have been obtained showing the convergence of the LSM estimator both as the number of basis functions go to infinity in Lemmas \ref{lem:Lp_convergence} and \ref{lem:noname2} and when the size of the simulated data goes to infinity for a fixed number of basis functions in Theorems \ref{thm:simulation_convergence} and \ref{thm:noname3}. Furthermore, these results are applicable to a large class of conditional monetary risk measures, utility functions and various actuarial multi-period valuations, the only requirement being $L^2$-continuity or a property like that established in Theorem \ref{thm:phi_Lipschitz}.
We also apply and evaluate the LSM algorithm with respect to multi-period cost-of-capital valuation considered in \cite{Engsner-Lindholm-Lindskog-17} and \cite{Engsner-Lindskog-20}, and in doing this also provide insight into practical considerations concerning implementation and validation of the LSM algorithm. 
\section{Proofs}

\begin{proof}[Proof of Lemma \ref{lem:Lp_convergence}]
Note that the quantities defined in \eqref{eq:phivalue_def} and \eqref{eq:optimal_LSM_definition} are independent of $D$, hence all norms below are a.s. constants. Define $\epsilon_t :=  \widehat{V}_{N,t}-V_t$. We will now show via backwards induction staring from time $t=T$ that $||\epsilon_t||_{2}  \to 0$. The induction base is trivial, since $\widehat{V}_{N,T}(L) = V_T(L)=0$. Now assume that $||\epsilon_{t+1}||_{2} \to 0$. Then
\begin{align*}
||\epsilon_t||_{2} &\leq 
||\widehat{V}_{N,t} - \phi_t(L_{t+1}+\widehat{V}_{N,t+1}(L))||_{2}\\
&\quad+ ||\phi_t(L_{t+1}+\widehat{V}_{N,t+1}(L))-\phi_t(L_{t+1}+V_{t+1}(L))||_{2}
\end{align*}
By the induction assumption and the continuity assumption, we know that the second summand goes to $0$. We now need to show that $||\widehat{V}_{N,t} - \phi_t(L_{t+1}+\widehat{V}_{N,t+1}(L))||_{2} \to 0$. Now we simply note, by the definition of the projection operator and denseness of the approximating sets,
\begin{align*}
&||\widehat{V}_{N,t} - \phi_t(L_{t+1}+\widehat{V}_{N,t+1}(L))||_{2}\\
&\quad = \inf_{B \in \mathcal{B}_{N}}||B-\phi_t(L_{t+1}+\widehat{V}_{N,t+1}(L))||_{2}\\
&\quad \leq  \inf_{B \in \mathcal{B}_{N}}||B-\phi_t(L_{t+1}+V_{t+1})||_{2}\\
&\quad\quad +||\phi_t(L_{t+1}+\widehat{V}_{N,t+1}(L))-\phi_t(L_{t+1}+V_{t+1})||_{2}\\
&\quad =\Big[   \inf_{B \in \mathcal{B}_{N}}||B-\phi_t(L_{t+1}+V_{t+1})||_{2}\Big]\\
&\quad\quad + ||\phi_t(L_{t+1}+\widehat{V}_{N,t+1}(L))-\phi_t(L_{t+1}+V_{t+1})||_{2}.
\end{align*}
By our assumptions, both these terms go to zero as $\phi_t(L_{t+1}+V_{t+1}(L))$ is afunction of the state $S_t$, which lies in $L^2(\calF_t)$.
\end{proof}

\begin{proof}[Proof of Lemma \ref{lem:simulation_stability_property}]
We first note that if $\widehat{\beta}^{(M)}_{t,N,Z_{M,t}} \to \beta_{t,N,Z_t}$ in probability, then $||(\beta_{t,N,Z_t})^\trans\mathbf{\Phi}_{t,N} -(\widehat{\beta}^{(M)}_{t,N,Z_{M,t}})^\trans\mathbf{\Phi}_{t,N} ||_{2} \to 0$ in probability, since $\widehat{\beta}^{(M)}_{t,N,Z_{M,t}}$ is independent of $\mathbf{\Phi}_{t,N}$ and $\calF_0$-measurable, while $\mathbf{\Phi}_{t,N}$ is independent of $D$. Hence it suffices to show that $\widehat{\beta}^{(M)}_{t,N,Z_{M,t}} \to \beta_{t,N,Z_t}$ in probability. Now, recalling the definition of $\widehat{\beta}^{(M)}_{t,N,Z_{M,t}}$ , we re-write \eqref{eq:beta_estimate_def} as
\begin{align*}
 &\widehat{\beta}^{(M)}_{t,N,Z_t} = \Big(\frac{1}{M}\big(\mathbf{\Phi}^{(M)}_{t,N}\big)^\trans \mathbf{\Phi}^{(M)}_{t,N} \Big)^{-1}\frac{1}{M}\big(\mathbf{\Phi}^{(M)}_{t,N}\big)^\trans Z_t^{(M)},
\end{align*}
Furthermore recall the form of $\beta_{t,N,Z_t}$ given by \eqref{eq:opt_beta_def}. We first note that since, by the law of large numbers, $\frac{1}{M}\big(\mathbf{\Phi}^{(M)}_{t,N}\big)^\trans \mathbf{\Phi}^{(M)}_{t,N} \to \E_0\big[\mathbf{\Phi}_{t,N} \mathbf{\Phi}_{t,N}^{\trans} \big]$ almost surely and thus in probability, it suffices to show that 
\begin{align*}
\frac{1}{M}\big(\Phi^{(M)}_{t,j}(S_t^{(i)})\big)_{1\leq i \leq M}^\trans Z_t^{(M)} 
\to \E_0[\Phi_{t,j}(S_t)Z_{t}]
\end{align*}
in probability for each $j =1, \dots N$. We first note that, letting $\epsilon_M^{(i)}:= Z_{M,t}^{(i)} - z_t(S_t^{(i)})$
\begin{align*}
&\Big|\frac{1}{M}\sum_{i=1}^M \Phi_{t,j}(S_t^{(i)})Z_{M,t}^{(i)} - \E_0[\Phi_{t,j}(S_t)Z_t] \Big| \\
&\quad \leq \Big|\frac{1}{M}\sum_{i=1}^M \Phi_{t,j}(S_t^{(i)})z_t(S_t^{(i)}) - \E_0[\Phi_{t,j}(S_t)Z_t]\Big| + \Big|\frac{1}{M}\sum_{i=1}^M\Phi_{t,j}(S_t^{(i)})\epsilon_M^{(i)} \Big|
\end{align*}
The first summand goes to zero in probability by the law of large numbers. Thus, we investigate the second summand using H\"older's inequality: 
\begin{align*}
& \Big|\frac{1}{M}\sum_{i=1}^M\Phi_{t,j}(S_t^{(i)})\epsilon_M^{(i)} \Big| \leq \Big( \frac{1}{M}\sum_{i=1}^M (\Phi_{t,j}(S_t^{(i)}))^2 \Big)^{1/2}\Big( \frac{1}{M}\sum_{i=1}^M (\epsilon_M^{(i)})^2 \Big)^{1/2}
\end{align*}
We see that, again by the law of large numbers, the first factor converges to $\E[ (\Phi_{t,j}(S_t))^2]$ in probability. Now we look at the second factor. By our independence assumption, $\epsilon_M^{(i)} \overset{d}{=}Z_{t,M}-Z_t$ and thus
\begin{align*}
&\Var\Big(  \Big( \frac{1}{M}\sum_{i=1}^M  (\epsilon_M^{(i)})^2  \Big)^{1/2}\Big| \calF_0\Big) \leq \E_0\Big[ \frac{1}{M}\sum_{i=1}^M  (\epsilon_M^{(i)})^2  \Big] = ||Z_{t,M}-Z_t||^2_{2},
\end{align*}
which, by assumption, goes to $0$ in probability, hence the expression goes to zero in probability. This concludes the proof.
\end{proof}

\begin{proof}[Proof of Theorem \ref{thm:simulation_convergence}]
We pove the statement by backwards induction, starting from time $t=T$. As before, the induction base follows immediately from our assumptions. Now assume $||\widehat{V}_{N,t+1}(L) -\widehat{V}_{N,t+1}^{(M)}(L)||_{2} \to 0$ in probability, as $M \to \infty$. By $L^2$-continuity we get that $||\phi_t(L_{t+1}+\widehat{V}_{N,t+1}(L)) -\phi_t(L_{t+1}+\widehat{V}_{N,t+1}^{(M)}(L))||_{2} \to 0$ in probability. But then by Lemma \ref{lem:simulation_stability_property} we immediately get that $||\widehat{V}_{N,t}(L) -\widehat{V}_{N,t}^{(M)}(L)||_{2}\to 0$ in probability.
\end{proof}

\begin{proof}[Proof of Lemma \ref{lem:Lipschitz_implies_L2}]
Note that
\begin{align*}
|| \phi_t(X)-\phi_t(Y)||^2_2 &= \E_0[| \phi_t(X)-\phi_t(Y)|^2]\leq \E_0[K^2\E_t[|X-Y|]^2] \\
&\leq K^2\E_0[\E_t[|X-Y|^2]] = K^2\E_0[|X-Y|^2] \\
&= K^2||X-Y||^2_2.
\end{align*}
Here we have used Jensen's inequality and the tower property of the conditional epectation at the second inequality and the following equality, respectively. From this $L^2$-continuity immediately follows.
\end{proof}

\begin{proof}[Proof of Lemma \ref{lem:noname1}]
By Lemma \ref{lem:R_monotonicity}, to construct upper and lower bounds for a quantity given by $\phi_{t,\alpha}(\xi)$ we may find upper and lower bounds for $\rho_t(-\xi)$ and insert them into the expression for $\phi_{t,\alpha}(\xi)$. Now
Take $X, Y \in L^p(\calF_{t+1})$ and let $Z := Y-X$. By monotnicity we get that
\begin{align*}
&\phi_{t}(X-|Z|) \leq \phi_{t}(Y) \leq\phi_{t}(X+|Z|).
\end{align*}
We now observe that
\begin{align*}
&\rho_t(-(X+|Z|)) \leq \rho_t(-X)+K\E_t[|Z|] \\
&\rho_t(-(X-|Z|)) \leq \rho_t(-X)-K\E_t[|Z|]
\end{align*}
We use this to also observe that, by the subadditivity of the $()_+$-operation,
\begin{align*}
&-\E_t[(\rho_t(-X)+K\E_t[|Z|]-X-|Z|)_+]  \\
&\quad\leq-\E_t[\rho_t(-X)-X)_+]+\E_t[(\rho_t(-|Z|) -|Z|)_+] \\
&\quad\leq-\E_t[\rho_t(-X)-X)_+]+\rho_t(-|Z|)\\
&\quad\leq-\E_t[\rho_t(-X)-X)_+]+K\E_t[|Z|].
\end{align*}
Similarly, we have that 
\begin{align*}
&-\E_t[(\rho_t(-X)-K\E_t[|Z|]-X+|Z|)_+]  \\
&\quad\leq-\E_t[\rho_t(-X)-X)_+]-\E_t[(\rho_t(-|Z|) -|Z|)_+] \\
&\quad\leq-\E_t[\rho_t(-X)-X)_+]-\rho_t(-|Z|)\\
&\quad\leq-\E_t[\rho_t(-X)-X)_+]-K\E_t[|Z|].
\end{align*}
From this we get that
\begin{align*}
&\phi_{t}(Y) \leq\phi_{t}(X+|Z|) \leq \phi_{t}(X) + K\E_t[|Z|] + \frac{1}{1+\eta}K\E_t[|Z|] \\ 
&\phi_{t}(Y) \geq\phi_{t}(X-|Z|) \geq \phi_{t}(X) - K\E_t[|Z|] - \frac{1}{1+\eta}K\E_t[|Z|] 
\end{align*}
From which Lipschitz continuity with resepct to the constant $2K$ immediately follows.
\end{proof}

\begin{proof}[Proof of Lemma \ref{lem:spectral_lipschitz}]
By subadditivity we have that 
\begin{align*}
&\rho_{t,M}(Y) -\rho_{t,M}(X) \leq \rho_{t,M}(Y-X) \leq \rho_{t,M}(-|Y-X)|),\\
&\rho_{t,M}(X) -\rho_{t,M}(Y) \leq \rho_{t,M}(X-Y) \leq \rho_{t,M}(-|Y-X)|),
\end{align*}
Now we simply note that
\begin{align*}
\rho_{t,M}(-|Y-X)|) &= -\int_0^1 F^{-1}_{t,|Y-X)|}(u) m(u)\cald u \\
&\leq  -m(0)\int_0^1 F^{-1}_{t,|Y-X)|}(u) \cald u\\
&=m(0) \E_t[|Y-X|]
\end{align*}
This concludes the proof.
\end{proof}

\begin{proof}[Proof of Lemma \ref{lem:VaR_subadditivity}]
We begin by showing \eqref{eq:VaR_ineq_1}. Let $E = \{ Z \leq \VaR_{1-\delta}(-Z)\}$. Then:
\begin{align*}
\P_t(X+Z \leq y) &\geq \P_t(E \cap \{X+Z \leq y\}) \\
&\geq  \P_t(E \cap \{X+\VaR_{1-\delta}(-Z) \leq y\})\\
&\geq \P_t(X+\VaR_{1-\delta}(-Z) \leq y) -\P(E^\complement) \\
&\geq \P_t(X\leq y-\VaR_{1-\delta}(Z)) - \delta
\end{align*}
Putting $y = \VaR_{\alpha+\delta}(-X)+\VaR_{1-\delta}(-Z) $ yields
\begin{align*}
&\P_t(X\leq y-\VaR_{1-\delta}(-Z)) - \delta \geq \alpha + \delta  - \delta = \alpha
\end{align*}
Hence $\VaR_{t,\alpha}(-(X+Z)) \leq \VaR_{\alpha+\delta}(-X)+\VaR_{1-\delta}(-Z) $. We now prove \eqref{eq:VaR_ineq_2} by applying \eqref{eq:VaR_ineq_1}
\begin{align*}
\VaR_{t,\alpha-\delta}(-X) &= \VaR_{t,\alpha-\delta}(-(X+Z + (-Z)) \\
&\leq \VaR_{t,\alpha}(-(X+Z)) + \VaR_{1-\delta}(Z),
\end{align*}
from which we get \eqref{eq:VaR_ineq_2}
\end{proof}

\begin{proof}[Proof of Corollary \ref{cor:VaR_Lipschitz}]
Let $Z=Y-X$. Now we simply note that, for any $\delta$
\begin{align*}
\VaR_{t,\alpha}(-(X+Z)) &\leq \VaR_{t,\alpha}(-(X+|Z|)) \\
&\leq \VaR_{t,\alpha+\delta}(-X) + \VaR_{t,1-\delta}(-|Z|) 
\end{align*}
By Markov's inequality, we may bound the latter summand:
\begin{align*}
&\VaR_{t,1-\delta}(-|Z|) \leq \frac{1}{\delta}\E_t[|Z|]
\end{align*}
Now for the lower bound, we similarly note
\begin{align*}
\VaR_{t,\alpha}(-(X+Z)) &\geq \VaR_{t,\alpha}(-(X-|Z|)) \\
&\geq \VaR_{t,\alpha-\delta}(-X) + \VaR_{t,1-\delta}(|Z|) 
\end{align*}
where again we may bound the second summand using Markov's inequality:
\begin{align*}
&\VaR_{t,1-\delta}(|Z|) \geq -\frac{1}{1-\delta}\E_t[|Z|]\geq -\frac{1}{\delta}\E_t[|Z|],
\end{align*}
since we have assumed $\delta<1/2$. This immediately yields that, almost surely,
\begin{align*}
\VaR_{t,\alpha}(-Y) \in \Big[ \VaR_{t,\alpha-\delta}(-X)-\frac{1}{\delta}\E_t[|X-Y|], \VaR_{t,\alpha+\delta}(-X) +\frac{1}{\delta}\E_t[|X-Y|]\Big].
\end{align*}
This immediately yields our desired result.
\end{proof}

\begin{lemma}\label{lem:R_monotonicity}
For any $X \in L^1(\calF_{t+1})$ and $R_1,R_2 \in L^0(\calF_t)$ with $R_1 \leq R_2$ a.s., 
\begin{align*}
&R_1 - \frac{1}{1+\eta}\E_t[(R_1-X)_+] \leq R_2 - \frac{1}{1+\eta}\E_t[(R_2-X)_+] \quad a.s.
\end{align*}
\end{lemma}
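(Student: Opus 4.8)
The plan is to show that the scalar map $r \mapsto g(r) := r - \frac{1}{1+\eta}\E_t[(r-X)_+]$ is nondecreasing along $\calF_t$-measurable arguments, so that $R_1 \leq R_2$ immediately yields $g(R_1) \leq g(R_2)$, which is precisely the asserted inequality. The whole argument reduces to a pointwise estimate on the increment of the positive-part function, followed by taking conditional expectations.

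First I would record the identity
\[
g(R_2) - g(R_1) = (R_2 - R_1) - \frac{1}{1+\eta}\E_t\big[(R_2-X)_+ - (R_1-X)_+\big].
\]
The key step is the elementary observation that $u \mapsto (u)_+$ is nondecreasing and $1$-Lipschitz. Applying this with $u = R_2 - X \geq R_1 - X =: v$ (using $R_1 \leq R_2$) gives the pointwise two-sided bound
\[
0 \leq (R_2-X)_+ - (R_1-X)_+ \leq R_2 - R_1 \quad \text{a.s.}
\]

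Next I would take $\E_t[\cdot]$ across this double inequality. Because $R_1$ and $R_2$ are $\calF_t$-measurable, the upper bound passes through the conditional expectation to give $\E_t[(R_2-X)_+ - (R_1-X)_+] \leq R_2 - R_1$. Substituting this into the displayed identity and using that $0 \leq \frac{1}{1+\eta} \leq 1$ (valid since $\eta \geq 0$), I obtain
\[
g(R_2) - g(R_1) \geq (R_2-R_1) - \frac{1}{1+\eta}(R_2 - R_1) = \frac{\eta}{1+\eta}\,(R_2-R_1) \geq 0,
\]
which completes the argument.

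There is no genuine obstacle here: the statement is essentially the monotonicity of a one-dimensional acceptability-type functional, and the only two points requiring care are that the correction factor $1/(1+\eta)$ does not exceed $1$ (so that the nonnegative correction term cannot reverse the inequality) and that $R_1, R_2$ may be pulled out of $\E_t[\cdot]$ by measurability. The crux, such as it is, is simply the pointwise Lipschitz-and-monotone bound on $(R_2-X)_+ - (R_1-X)_+$.
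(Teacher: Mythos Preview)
Your proof is correct and more streamlined than the paper's. The paper argues by introducing the events $A_i = \{R_i - X \geq 0\}$, expanding $(R_i-X)_+ = \calI_{A_i}(R_i-X)$, and then tracking the difference via the inclusion $A_1 \subseteq A_2$ and an estimate $\E_t[\calI_{A_1}X] \leq \E_t[\calI_{A_2}X] - \P_t(A_2\setminus A_1)R_1$. Your route bypasses all of this by invoking directly that $u\mapsto(u)_+$ is nondecreasing and $1$-Lipschitz, which yields the pointwise sandwich $0 \leq (R_2-X)_+ - (R_1-X)_+ \leq R_2-R_1$ in one line. Both arguments hinge on the same fact that $\frac{1}{1+\eta}\leq 1$, but yours isolates it more cleanly and avoids the set-theoretic bookkeeping; it also makes transparent the quantitative lower bound $g(R_2)-g(R_1)\geq \frac{\eta}{1+\eta}(R_2-R_1)$, which the paper's argument does not display.
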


\begin{proof}[Proof of Lemma \ref{lem:R_monotonicity}] 
Let $R_1 \leq R_2$ a.s. and let $A_1 = \{R_1-X \geq 0\}$ and $A_2 = \{R_2-X \geq 0\}$. Note that $A_1 \subseteq A_2$ almost surely, i.e. $\P_t(A_1 \setminus A_2) = 0$ a.s. We now note that:
\begin{align*}
&R_1 - \frac{1}{1+\eta}\E_t[(R_1-X)_+]  = \big(1-\frac{1}{1+\eta}\P_t(A_1)\big)R_1 + \frac{1}{1+\eta}\E_t[\calI_{A_1}X] \\
&R_2 - \frac{1}{1+\eta}\E_t[(R_2-X)_+]  = \big(1-\frac{1}{1+\eta}\P_t(A_2)\big)R_2 +\frac{1}{1+\eta}\E_t[\calI_{A_2}X]
\end{align*}
We look at the expectation in the  first expression:
\begin{align*}
&\E_t[\calI_{A_1}X] =\E_t[\calI_{A_2}X] - \E_t[X\calI_{A_2 \setminus A_1}] \leq \E_t[\calI_{A_2}X] -\P_t(A_2\setminus A_1)R_1
\end{align*}
We now see that
\begin{align*}
&\big(1-\frac{1}{1+\eta}\P_t(A_1)\big)R_1 +\frac{1}{1+\eta}\E_t[\calI_{A_1}X] \\
&\quad\leq \big(1-\frac{1}{1+\eta}\P_t(A_1)\big)R_1  + \frac{1}{1+\eta}\E_t[\calI_{A_2}X]  - \frac{1}{1+\eta}\P_t(A_2\setminus A_1)R_1 \\
&\quad= \big(1-\frac{1}{1+\eta}\P_t(A_2)\big)R_1+ \frac{1}{1+\eta}\E_t[\calI_{A_2}X] \\
&\quad\leq \big(1-\frac{1}{1+\eta}\P_t(A_2)\big)R_2+ \frac{1}{1+\eta}\E_t[\calI_{A_2}X]
\end{align*}
This concludes the proof.
\end{proof}

\begin{proof}[Proof of Theorem \ref{thm:phi_Lipschitz}]
Let $Z=Y-X$. Note that
\begin{align*}
&\phi_{t,\alpha}(Y) =\phi_{t,\alpha}(X+Z) \leq \phi_{t,\alpha}(X+|Z|).
\end{align*}
As for the $\VaR$-part of $\phi_{t,\alpha}$, including that in the expectation, we note that by Lemma \ref{lem:VaR_subadditivity} $\VaR_{t,\alpha}(-(X+|Z|))\leq \VaR_{t,\alpha+\delta}(-X)+\VaR_{t,1-\delta}(-|Z|)$. We now note that, by subadditivity of $x\mapsto (x)_+$,
\begin{align*}
&-\E_t[(\VaR_{t,\alpha+\delta}(-X)+\VaR_{t,1-\delta}(-|Z|) -X-|Z|)_+]  \\
&\quad\leq-\E_t[(\VaR_{t,\alpha+\delta}(-X)-X)_+]+\E_t[(\VaR_{t,1-\delta}(-|Z|) -|Z|)_+] \\
&\quad\leq -\E_t[(\VaR_{t,\alpha+\delta}(-X)-X)_+] + \VaR_{t,1-\delta}(-|Z|). 
\end{align*}
Hence
\begin{align*}
\phi_{t,\alpha}(X+|Z|) &\leq \phi_{t,\alpha+\delta}(X) + \frac{2+\eta}{1+\eta}\VaR_{t,1-\delta}(-|Z|) \\
&\leq \phi_{t,\alpha+\delta}(X) + \frac{2}{\delta}\E_t[|Z|].
\end{align*}
Here we have used the Markov's inequality bound from Corollary \ref{cor:VaR_Lipschitz}. 

We now similarly construct a lower bound for $\phi_{t,\alpha}(Y)$:
\begin{align*}
&\phi_{t,\alpha}(Y) =\phi_{t,\alpha}(X+Z) \geq \phi_{t,\alpha}(X-|Z|) 
\end{align*}
Again, for the $\VaR$-part, we note that by Lemma \ref{lem:VaR_subadditivity} $\VaR_{t,\alpha}(-(X-|Z|))\geq \VaR_{t,\alpha-\delta}(-X)+\VaR_{t,1-\delta}(|Z|)$. We now analyze the resulting expected value part, using subadditivity of $x\mapsto (x)_+$:
\begin{align*}
&-\E_t[(\VaR_{t,\alpha+\delta}(-X)+\VaR_{t,1-\delta}(|Z|) -X+|Z|)_+]  \\
&\quad\geq -\E_t[(\VaR_{t,\alpha+\delta}(-X)-X)_+] - \E_t[(\VaR_{t,\alpha+\delta}(|Z|)+|Z|)_+] \\
&\quad\geq  -\E_t[(\VaR_{t,\alpha+\delta}(-X)-X)_+] -\E_t[|Z|]
\end{align*}
Hence we get the lower bound
\begin{align*}
\phi_{t,\alpha}(X-|Z|) &\leq \phi_{t,\alpha-\delta}(X) +\VaR_{t,1-\delta}(|Z|)+\frac{1}{1+\eta}\E_t[|Z|] \\
&\geq \phi_{t,\alpha-\delta}(X) -\frac{2}{\delta}\E_t[|Z|].
\end{align*}
Here, again, we have used the Markov's inequality bound from Corollary \ref{cor:VaR_Lipschitz}. Hence we have shown that
\begin{align*}
&\phi_{t,\alpha}(Y) \in \Big[\phi_{t,\alpha-\delta}(X) -\frac{2}{\delta}\E_t[|X-Y|] ,\phi_{t,\alpha+\delta}(X)+\frac{2}{\delta}\E_t[|X-Y|] \Big],
\end{align*}
from which \eqref{eq:phi_Lipschitz} immediately follows.
\end{proof}
\begin{proof}[Proof of Corollary \ref{cor:phi_L2cont}]
Choose a sequence $\delta_n \to 0$ such that $\frac{2}{\delta_n^2}\E[|X-X_n|^2] \to 0$ with $\alpha +\delta_n<1$ and $\delta_n<1/2$. We now use the following inequality, which follows from the monotonicity of $\phi_{t,\alpha}$ in $\alpha$:
\begin{align*}
&|\phi_{t,\alpha}(X) - \phi_{t,\alpha}(X_n)| \\
&\quad \leq \phi_{t,\alpha+\delta_n}(X)-\phi_{t,\alpha-\delta_n}(X) + \inf_{|\epsilon|<\delta_n}|\phi_{t,\alpha+\epsilon}(X)-\phi_{t,\alpha}(X_n)| \\
&\quad \leq \phi_{t,\alpha+\delta_n}(X)-\phi_{t,\alpha-\delta_n}(X) + \frac{2}{\delta_n}\E[|X-X_n|\mid \calH_t]
\end{align*}
By $L^2$-convergence and our choice of $\delta_n$, the last term clearly goes to $0$ by our assumptions.

As for the first summand, we see that for any sequence $\delta_n \to 0$, $\phi_{t,\alpha-\delta_n}(X)-\phi_{t,\alpha+\delta_n}(X) \to 0$ almost surely (by the continuity assumption of $\VaR_{t,u}$ at $\alpha$) and furthermore it is a decreasing sequence of nonnegative random variables in $L^2(\calH_t)$. Hence by Lebesgue's monotone convergence theorem $||\phi_{t,\alpha-\delta_n}(X)-\phi_{t,\alpha+\delta_n}(X)||_ 2 \to 0$. This concludes the proof.
\end{proof}
\begin{proof}[Proof of Lemma \ref{lem:noname2}]
By Lemma \ref{cor:phi_L2cont}, $\phi_{t,\alpha}$ is $L^2$ continuous with respect to limit objects with a.s. continuous $t$-conditional distributions. Hence the proof is completely analogous to that of Lemma 1.
\end{proof}

\begin{proof}[Proof of Lemma \ref{lem:VaR_L2_continuity}]
Fix $\epsilon >0$. we want to show that $\P(||\phi_{t,\alpha}(\beta^\trans \mathbf{\Phi}_{t+1,N})-\phi_{t,\alpha}(\beta_n^\trans \mathbf{\Phi}_{t+1,N})||_ 2>\epsilon) \to 0$ as $n \to \infty$. We first note that, for any $\delta \in (0,1-\alpha)$  with $\delta< 1/2$, we have an inequality similar to that in the proof of Corollary \ref{cor:phi_L2cont}:
\begin{align*}
&||\phi_{t,\alpha}(\beta^\trans \mathbf{\Phi}_{t+1,N})-\phi_{t,\alpha}(\beta_n^\trans \mathbf{\Phi}_{t+1,N})||_ 2  \\
&\quad\leq ||\phi_{t,\alpha-\delta }(\beta^\trans \mathbf{\Phi}_{t+1,N})-\phi_{t,\alpha+\delta}(\beta^\trans \mathbf{\Phi}_{t+1,N})||_2 \\
&\quad+\inf_{|\xi|\leq \delta}||\phi_{t,\alpha+\xi}(\beta^\trans \mathbf{\Phi}_{t+1,N})-\phi_{t,\alpha}(\beta_n^\trans \mathbf{\Phi}_{t+1,N})||_ 2
\end{align*}
If we look at the first summand, we see that for any sequence $\delta_n \to 0$, $\phi_{t,\alpha-\delta_n }(\beta^\trans \mathbf{\Phi}_{t+1,N})-\phi_{t,\alpha+\delta_n}(\beta^\trans \mathbf{\Phi}_{t+1,N}) \to 0$ almost surely (by a.s. continuity) and furthermore it is a decreasing sequence of nonnegative random variables. Hence by Lebesgue's monotone convergence theorem $||\phi_{t,\alpha-\delta_n }(\beta^\trans \mathbf{\Phi}_{t+1,N})-\phi_{t,\alpha+\delta_n}(\beta^\trans \mathbf{\Phi}_{t+1,N})||_ 2 \to 0$ as a sequence of constants, since the expression is independent of $D$. 

We now apply Theorem \ref{thm:phi_Lipschitz} to the second term to see that
\begin{align*}
\inf_{|\xi|\leq \delta}||\phi_{t,\alpha+\xi}(\beta^\trans \mathbf{\Phi}_{t+1,N})-\phi_{t,\alpha}(\beta_n^\trans \mathbf{\Phi}_{t+1,N})||_ 2 &\leq \frac{2}{\delta}||(\beta-\beta_n)^\trans\mathbf{\Phi}_{t+1,N}||_2\\
&\leq  \frac{2}{\delta}||\beta-\beta_n||_\infty ||\mathbf{\Phi}_{t+1,N}||_2
\end{align*}
Note that $||\mathbf{\Phi}_{t+1,N}||_2 :=K_\Phi$ is just a constant. We now note that, as $||\beta-\beta_n||_\infty \to 0$ in probability, then for any fixed  $\epsilon>0$, it is possible to choose a sequence $\delta_n\to 0$ such that $\P(\frac{2K_\Phi}{\delta_n}||\beta-\beta_n||_\infty>\epsilon) \to 0$. Hence, for any fixed $\epsilon >0$, $\P(||\phi_{t,\alpha}(\beta^\trans \mathbf{\Phi}_{t+1,N})-\phi_{t,\alpha}(\beta_n^\trans \mathbf{\Phi}_{t+1,N})||_ 2>\epsilon) \to 0$ as $n \to \infty$.
\end{proof}

\begin{proof}[Proof of Theorem \ref{thm:noname3}]
We prove the statement by backwards induction, starting from time $t=T$. The induction base is trivial. Now assume that the statement holds for time $t+1$. But then, by Lemma \ref{lem:VaR_L2_continuity}, $||\phi_{t,\alpha}(L_{t+1}+\widehat{V}_{N,t,\alpha}(L))- \phi_{t,\alpha}(L_{t+1}+\widehat{V}^{(M)}_{N,t,\alpha}(L))||_2\to 0$ in probability. Hence, we get immediately by Lemma \ref{lem:simulation_stability_property} that $||\widehat{V}_{N,t,\alpha}(L) -\widehat{V}_{N,t,\alpha}^{(M)}(L)||_{2} \to 0$ in probability. This concludes the proof.
\end{proof}

\newpage

\end{document}